\tikzset{
  column sep/.code=\def\pgfmatrixcolumnsep{\pgf@matrix@xscale*(#1)},
  row sep/.code   =\def\pgfmatrixrowsep{\pgf@matrix@yscale*(#1)},
  matrix xscale/.code=%
    \pgfmathsetmacro\pgf@matrix@xscale{\pgf@matrix@xscale*(#1)},
  matrix yscale/.code=%
    \pgfmathsetmacro\pgf@matrix@yscale{\pgf@matrix@yscale*(#1)},
  matrix scale/.style={/tikz/matrix xscale={#1},/tikz/matrix yscale={#1}}}
\def\pgf@matrix@xscale{1}
\def\pgf@matrix@yscale{1}
\tikzset{every picture/.style={line width=0.3mm},  every state/.style={draw=black,ellipse,minimum width=1em,minimum height=1em,inner sep=2pt,text=black}}
\newcommand*{\relrelbarsep}{.386ex}
\newcommand*{\relrelbar}{%
  \mathrel{%
    \mathpalette\@relrelbar\relrelbarsep
  }%
}
\newcommand*{\@relrelbar}[2]{%
  \raise#2\hbox to 0pt{$\m@th#1\relbar$\hss}%
  \lower#2\hbox{$\m@th#1\relbar$}%
}
\providecommand*{\rightrightarrowsfill@}{%
  \arrowfill@\relrelbar\relrelbar\rightrightarrows
}
\providecommand*{\leftleftarrowsfill@}{%
  \arrowfill@\leftleftarrows\relrelbar\relrelbar
}
\providecommand*{\xrightrightarrows}[2][]{%
  \ext@arrow 0359\rightrightarrowsfill@{#1}{#2}%
}
\providecommand*{\xleftleftarrows}[2][]{%
  \ext@arrow 3095\leftleftarrowsfill@{#1}{#2}%
}
\newcommand{\fourrightarrows}{\substack{\xrightrightarrows{}\\ \xrightrightarrows{}}}
\newcommand{\threerightarrows}{\resizebox{2.2ex}{!}{$
    \substack{\longrightarrow\\ \longrightarrow \\ \longrightarrow}
  $}}
\newtheorem{theorem}{Theorem}
\newtheorem{corollary}{Corollary}
\newtheorem{lemma}{Lemma}
\newtheorem{definition}{Definition}
\newtheorem{example}{Example}
\newtheorem{properties}{Properties}
\newtheorem{remark}{Remark}
\newenvironment{proof}[1][Proof]{\begin{trivlist}\item[\hskip \labelsep {\bfseries #1}]}{\end{trivlist}}
\title{Graph Surfing in Reaction Systems\\ from a Categorial Perspective}
\author{Hans-J\"org Kreowski and Aaron Lye
\institute{University of Bremen, Department of Computer Science\\ 
P.O.Box 33 04 40, 28334 Bremen, Germany}
\email{$\{$kreo,lye$\}$@informatik.uni-bremen.de}
}
\newcommand{\Nat}{\mathbb{N}}
\newcommand{\reaction}[9]{(#2\colon #3 \rightarrow #1,(#4\colon #5 \rightarrow #1, #6\colon #7 \rightarrow #5),#8\colon #9\rightarrow #1)}
\newcommand{\iso}{\cong}
\newcommand{\id}{1}
\newcommand{\binomb}[2]{\begin{bmatrix}#1\\#2\end{bmatrix}}
\newcommand{\scheme}{\mathit{Scm}}
\newcommand{\Ob}{\mathit{Ob}}
\newcommand{\Mor}{\mathit{Mor}}
\newcommand{\cat}[1]{\mathbf{#1}}
\newcommand{\sigmacat}[1]{\Sigma\text{-}\cat{#1}}
\newcommand{\diagramcat}[2]{\cat{#1}^{#2}}
\newcommand{\eiu}{$\mathit{eiu}$}
\newcommand{\typedgraphsdiagramcat}[1]{\diagramcat{Graphs}{\bullet \to \mathit{TG}#1}}
\newcommand{\init}[1]{\mathit{init}_{#1}}
\newcommand{\INIT}{\mathit{INIT}}
\newcommand{\UNION}{\mathit{UNION}}
\newcommand{\union}{\mathit{union}}
\newcommand{\emptymor}[1]{{\mathit{empty_{#1}}}}
\newcommand{\PB}{\mathit{PB}}
\newcommand{\COLIMIT}{\mathit{COLIMIT}}
\newcommand{\MPT}{\mathit{MPT}}
\newcommand{\emptyhypergraph}{(\emptyset, \emptyset, \emptyset_{\emptyset^*}, \emptyset_{\Sigma})}
\newcommand{\en}{\mathit{en}}
\newcommand{\res}{\mathit{res}}
\newcommand{\trans}{\mathit{trans}}
\begin{document}
\maketitle

\begin{abstract}
Graph-based reaction systems were recently introduced as a generalization
of the intensely studied set-based reaction systems.
They deal with
simple edge-labeled directed graphs, and dynamic semantics of
graph-based reaction systems is defined by graph surfing as a novel kind
of graph transformation where, in a single surf step, reactions are
applied to a subgraph of a given background graph yielding a successor
subgraph.
In this paper, we propose
a categorical approach to reaction systems so that a wider spectrum of
data structures becomes available on which reaction systems can be based.
In this way, many types of
graphs, hypergraphs, and graph-like structures are covered.

\end{abstract}

\section{Introduction}
\label{sec:introduction}
Rozenberg and the first author introduced graph surfing in graph-based
reaction systems as a novel kind of graph transformation
in~\cite{Kreowski-Rozenberg:18,Kreowski-Rozenberg:19}.
They consider simple edge-labeled directed graphs.
A graph-based reaction system consists of a finite background graph $B$
and a set of reactions each
of which is a triple $(R,I,P)$ where $R$ and $P$ are subgraphs of $B$, called
reactant and product respectively, and $I=(I_V,I_E)$ is a pair of sets of
vertices and edges of $B$ respectively, called inhibitor. Such a reaction
is enabled on a state $T$ being a subgraph of $B$ if $R$ is subgraph of $T$ and
none of the element of $I_V$ and $I_E$ belongs to $T$.
The latter allows to
forbid edges without forbidding their sources and targets necessarily.
All enabled reactions
are applied to a state in parallel yielding the union of all their
products as successor state. The iterated application of reactions form
trajectories on the set of subgraphs of the background graph – the
metaphorical graph surfing. Before each step, a context graph can be
added to the current state so that the processing becomes interactive.
Graph-based reaction systems generalize the seminal concept of set-based
reaction systems that was introduced by Ehrenfeucht and Rozenberg more
than 12 years ago in~\cite{Ehrenfeucht-Rozenberg:07}
and has been intensely studied since then (see,
e.g.,~\cite{BEMR11,EPR17,FMP14,Sal12a}).
Set-based reaction systems coincide with graph-based
reaction systems the background graphs of which are discrete graphs and
the inhibitor sets are both empty.

In this paper, we advocate a categorical approach to reaction systems by
defining them over categories that provide empty subobjects,
intersections and unions, \eiu-categories for short.
A wide spectrum of
categories of graphs, hypergraphs and graph-like structures fit into the
approach. The categorical framework is tailored in such a way that
reaction systems over an \eiu-category can be defined in close analogy to the
set- and graph-based reaction systems. The ingredients of set- and
graph-based reaction systems are finite sets/graphs, subsets/subgraphs
including the empty set/empty graph, subset/subgraph inclusions,
intersections of
two subsets/subgraphs, and the unions of finite sets of
subsets/subgraphs. As
the categorical counterparts, we use finite objects, subobjects and
subobject inclusions, as they are provided by every category, and we
require a special initial object with monomorphic initial morphisms as
empty subobjects, pullbacks of monomorphisms as intersections and
special colimits as unions in addition.
This paper continues our work on a categorical approach to reaction
systems that started in~\cite{Kreowski-Lye:20} where we tried to identify basic
categorical notions that allow to define reaction systems generalizing the
known set- and graph-based reaction systems.
The \eiu-categories introduced in the present paper are more restrictive,
but cover still all the relevant examples and provide much more useful categorical
machinery.

The paper is organized as follows.
Section~\ref{sec:preliminaries} provides the categorial framework.
In Section~\ref{sec:reaction-systems-cats}, we introduce the notion of reaction systems over \eiu-categories
exemplifying the conception by a reaction system over the category of hypergraphs.
In Section~\ref{sec:further-look-at-the-categorial-framework},
we show that certain diagram categories are \eiu-categories
such that many categories of graphs, hypergraphs and further graph-like
structures turn out to be \eiu-categories and, therefore, can be employed
as base category for reaction systems.
Section~\ref{sec:category-reaction-systems-over-cat} is devoted to the
question how meaningful morphisms between reaction systems over a
category may look like giving a first answer.
This enables us to define a category of reaction systems over an \eiu-category.
Section~\ref{sec:conclusion} concludes the paper.

\section{The Categorial Prerequisites}
\label{sec:preliminaries}
In this section, the categorical prerequisites are provided that allow
us to define reaction systems over a so-called \eiu-category in the next section.
In Subsection~\ref{subsec:categories-subobjects-finiteobjects},
we recall some well-known categorical notions including
subobjects, finite objects, initial objects, pullbacks, and special colimits (cf., e.g.,~\cite{Ehrig-Ehrig-Prange.ea:06,Adamek-Herrlich-Strecker:09,Ehrig-Ermel-Golas-Hermann:15}).
Based on these concepts, we introduce the notion of an
\eiu-category in Subsection~\ref{subsec:emptyobjects-intersection-unions}.

\subsection{Categorial Preliminaries}
\label{subsec:categories-subobjects-finiteobjects}

A \emph{category} $\cat{C} = (\Ob_{\cat{C}}, \Mor_{\cat{C}}, \circ, \id)$ consists of
a class of \emph{objects} $\Ob_{\cat{C}}$,
a set of \emph{morphisms} $\Mor_{\cat{C}}(A,B)$ for each pair of objects $A,B \in \Ob_{\cat{C}}$,
an associative \emph{composition operation} $\circ\colon \Mor_{\cat{C}}(B,C) \times \Mor_{\cat{C}}(A,B) \to \Mor_{\cat{C}}(A,C)$ for each triple of objects $A,B,C \in \Ob_{\cat{C}}$, and,
an \emph{identity} morphism $\id_A \in \Mor_{\cat{C}}(A,A)$ for each object $A \in \Ob_{\cat{C}}$ such that
$f \circ \id_A = f$ and $\id_B \circ f = f$ for each $f \in \Mor_{\cat{C}}(A,B)$ holds.

We may write $f\colon A \to B$ or $A \xrightarrow[]{f} B$ for $f \in \Mor_{\cat{C}}(A,B)$ and
$A \xrightrightarrows[h]{k} B$ for pairs of morphisms with same domain and codomain.
Let $f\colon A \to B$ and $g\colon B \to C$.
We may write $A \xrightarrow[]{f} B \xrightarrow[]{g} C$ instead of $g \circ f$.

A morphism $f\colon A \rightarrow B$ is a \emph{monomorphism} if, for all pairs $C \xrightrightarrows[h]{k} A$ of morphisms, $f \circ h = f \circ k$ implies $h = k$.


A morphism $f\colon A \rightarrow B$ is an \emph{isomorphism} if there exists an \emph{inverse} morphism $f^{-1}\colon B \to A$ with $f^{-1}\circ f = \id_A$ and $f \circ f^{-1} =\id_B$.
Two objects $A,B$ are \emph{isomorphic}, denoted $A\iso B$, if there is an isomorphism $f\colon A \to B$.

%
%
%
%
%
%
%

A \emph{subobject} of $B$ for some $B \in \Ob_{\cat{C}}$ is an equivalence class of the following equivalence of monomorphisms with codomain $B$:
Two monomorphisms $m_1\colon A_1 \to B, m_2\colon A_2 \to B$ are \emph{equivalent}, denoted by $m_1 \iso m_2$, if there is an isomorphism $i\colon A_1 \to A_2$ such that $m_1 = m_2 \circ i$.

To deal with subobjects, we use their elements as representatives. This does not cause any problem because most categorical concepts and constructions are unique up to isomorphism.

Given subobjects $p_1\colon P_1 \to B$ and $p_2\colon P_2 \to B$, a monomorphism $m\colon P_1 \to P_2$ is a \emph{subobject inclusion} from $p_1$ to $p_2$ if $p_1 = p_2 \circ m$, and we may write $p_1 \subseteq p_2$.

An object is \emph{finite} if its set of subobjects is finite.

An object $\INIT \in \Ob_{\cat{C}}$ is an \emph{initial object} if there is exactly one unique morphism $\init{B}\colon \INIT \to B$ for each object $B \in \Ob_{\cat{C}}$.



Let $p_1\colon P_1 \to B, p_2\colon P_2 \to B$ be morphisms with common codomain $B$.
A \emph{pullback} $(\PB(p_1,p_2), p'_1, p'_2)$ of $p_1$ and $p_2$ is defined by
a pullback object $\PB(p_1,p_2)$ and
morphisms $p'_1\colon \PB(p_1,p_2) \to P_1$ and $p'_2\colon \PB(p_1,p_2) \to P_2$ such that
$p_1 \circ p'_1 = p_2 \circ p'_2$
and the following universal property holds:
For each object $Y$ with morphisms $p''_1\colon Y \to P_1$ and $p''_2\colon Y \to P_2$,
such that $p_1 \circ p''_1 = p_2 \circ p''_2$,
there is a unique \emph{universal morphism} $u \colon Y \to \PB(p_1,p_2)$
such that $p'_1 \circ u = p''_1$ and $p'_2 \circ u = p''_2$.
The following diagram illustrates the situation.
\[
\begin{tikzcd}
& Y \arrow[d, dashed, "u" ] \arrow[ddl, phantom, bend right, "~=" below right] \arrow[ddl, bend right, "p''_1" above left]  \arrow[ddr, bend left, "p''_2" above right] \arrow[ddr, phantom, bend left, "=~" below left]  & \\
& \PB(p_1,p_2)\arrow[dr, "p'_2" below left]\arrow[dl, "p'_1"] \arrow[dd, phantom, "="] & \\
P_1\arrow[dr, "p_1"] && P_2\arrow[dl, "p_2"] \\
& B &
\end{tikzcd}
\]
The dashed arrow indicates that the morphism exists uniquely.

Let $S$ be a set of morphisms with codomain $B$.
Let $\PB(S)$ be the set of all pullbacks $(PB(p_1,p_2),\allowbreak p'_1\colon \PB(p_1,p_2) \to P_1, p'_2\colon \PB(p_1,p_2) \to P_2)$ of $p_1, p_2$ for each pair $(p_1\colon P_1 \to B), (p_2\colon P_2\to B) \in S$ with $p_1 \ne p_2$.
Then an object $\COLIMIT(\PB(S))$ together with a morphism $p''\colon P \to \COLIMIT(\PB(S))$ for each $(p\colon P\to B)\in S$, called \emph{injection}, such that
$p''_1 \circ p'_1 = p_2'' \circ p'_2$ for each pullback $(\PB(p_1,p_2), p'_1, p'_2) \in \PB(S)$
is the \emph{colimit} of $\PB(S)$ if the following universal property holds:
    For each object $X$ together with a morphism $\hat{p} \colon P \to X$ for each $(p\colon P \to B) \in S$
    satisfying $\hat{p}_1\circ p'_1 = \hat{p}_2 \circ p'_2$ for each pullback $(\PB(p_1,p_2), p'_1, p'_2) \in \PB(S)$, there exists a unique \emph{universal morphism} $m\colon \COLIMIT(\PB(S)) \to X$ such that $m\circ p'' = \hat{p}$ for each $p \in S$.

According to the definition, the following holds for three special cases of this colimit.
\begin{enumerate}
\item
$\COLIMIT(\PB(\emptyset))=\INIT$.

\item
$\COLIMIT(\PB(\{p\}))=P$ for each subobject $p\colon P \to B$.

\item
Given two subobjects $p_i \colon P_i \to B, i=1,2$, then
$\COLIMIT(\PB(\{p_1,p_2\}))$ together with the injections
$p''_i:P_i \to \COLIMIT(\PB(\{p_1,p_2\}))$
is the pushout of the pullback
$(\PB(p_1,p_2),p'_1,p'_2)$.
\end{enumerate}

It may be noted that the universal
property of the colimit yields a universal morphism\linebreak $m\colon\COLIMIT(\PB(S)) \to B$
with $m\circ p''=p$ for each $p \in S$.

The following diagram illustrates the situation for three subobjects.
\[
\begin{tikzcd}
& \PB(p_1,p_2) \arrow[dl, "p'_1" above left] \arrow[dr, "p'_2" below left] & \PB(p_1,p_3) \arrow[dll, "p'_1" below left] \arrow[drr, "p'_3" below left] & \PB(p_2,p_3) \arrow[dl, "p'_2" below left] \arrow[dr, "p'_3" above right] &\\
P_1 \arrow[ddrr, bend right, "p_1" below left] \arrow[drr, "p_1''" above right] & & P_2 \arrow[dd, bend left = 55, "p_2" near start]  \arrow[d, "p_2''" left] & & P_3 \arrow[ddll, bend left, "p_3" below right] \arrow[dll, "p_3''" above left]\\ %
& & \COLIMIT(\PB(\{p_1,p_2,p_3\})) \arrow[d, "m" left] & & \\ 
& & B & &
\end{tikzcd}
\]

\subsection{Empty Subobjects, Intersections and Unions}
\label{subsec:emptyobjects-intersection-unions}
Using the notions of the previous subsection, we can now define the
class of categories that are considered in this paper.

A category $\cat{C}$ is an \emph{\eiu-category} if $\cat{C}$ has
\begin{enumerate}
\item an initial object $\INIT$, and
\item for every finite object $B$, pullbacks of the subobjects of $B$, as well as
\item colimits of the sets of all pairwise pullbacks of sets of subobjects of every finite object $B$
\end{enumerate}
subject to the following conditions:
\begin{enumerate}
\item
$\INIT$ has only itself as subobject and the initial morphism into $B$ is a
monomorphism, and
\item
the universal morphism from $\COLIMIT(\mathit{PB}(S))$ into $B$
for every set $S$ of subobjects of $B$ is a monomorphism.
\end{enumerate}
We use the following notions and notations for \eiu-categories and every
of its finite objects $B$.
\begin{enumerate}
\item \label{crs-general-assumption-item1}
The subobject represented by the initial morphism into $B$ is called
\emph{empty subobject} of $B$ and denoted by $\emptymor{B}\colon \INIT \to B$.

\item \label{crs-general-assumption-item2}
As pullbacks are stable under monomorphisms, the pullback morphisms
$p'_i\colon \mathit{PB}(p_1,p_2) \to P_i$ of two subobjects $p_i \colon P_i \to B$ for $i=1,2$ are monomorphisms.
Further, because monomorphisms are closed under composition,
$p'_1\circ p_1=p'_2\circ p_2$ represents a subobject of $B$ called
\emph{intersection} of $p_1$ and $p_2$ which is denoted by $p_1\cap p_2 \colon P_1\cap P_2 \to B$.

\item \label{crs-general-assumption-item3}
Given a set $S$ of subobjects of $B$, the universal morphism from\linebreak
$\COLIMIT(\mathit{PB}(S))$ into $B$ represents a subobject of $B$
called \emph{union} of $S$
which is denoted by $\union(S) \colon \UNION(S) \to B$.
We may write $p_1 \cup p_2$ for the binary (effective) $\union(\{p_1,p_2\})$.
\end{enumerate}

Empty subobjects, intersections and unions have some useful properties
(cf. Remarks~\ref{remark:properties-rs} and~\ref{remark:properties-interactive-processes} in the next section).
The initials e, i, and u of the three
concepts are used to name the category.

\begin{properties}
\label{properties:empty-objects-intersection-union}
Let $B$ be a finite object.
\begin{enumerate}
\item
\label{item:intersection-new-1}
Let $p\colon P \to B$ and $p_0\colon P_0 \to B$ be subobjects of $B$ with $p_0 \subseteq p$. Then
\begin{enumerate}
\item
\label{item:intersection-new-1a}
$p \cap p_0 = p_0$,

\item
\label{item:union-7}
 $p \cup p_0 = p$.
\end{enumerate} 
In particular,
$p \cap \emptymor{B} = \emptymor{B}$ and
$p\cup \emptymor{B} = p$.


\item
\label{item:union-6}
Let $S$ be a set of subobjects of $B$. Then
$\union(S \cup \{\emptymor{B}\}) = \union(S)$.

\item
\label{item:union-8}
Let $S_0$ and $S$ be sets of subobjects of $B$ with $S_0 \subseteq S$. Then $\union(S_0) \subseteq \union(S)$.


\end{enumerate}
\end{properties}

\begin{proof}
1.
$p_0 \subseteq p$ means that there is a monomorphism $m\colon P_0 \to P$ with $p \circ m = p_0$.
Using this equation, it is easy to show that $(P_0, \id_{P_0}, m)$ is a pullback of $p_0$ and $p$ and $(P, m, \id_P)$ is a pushout of $\id_{P_0}$ and $m$.
As pullbacks and pushouts are unique up to isomorphisms, one gets $p \cap p_0 = p_0$ and $p \cup p_0 = p$ for the represented subobjects.
This holds for $p_0 = \emptymor{B}$, in particular.
The following diagrams illustrate the situation.
\[
\begin{tikzcd}
& Y \arrow[d, dashed, "q_0" ] \arrow[ddl, bend right, "q_0" above left] \arrow[ddl, phantom, bend right, "~=" below right] \arrow[ddr, bend left, "q" above right] \arrow[ddr, phantom, bend left, "=~" below left] & \\
& P_0 \arrow[dr, "m" below left]\arrow[dl, "\id_{P_0}"] \arrow[dd, phantom, "="] & \\
P_0\arrow[dr, "p_0" below left] && P\arrow[dl, "p"] \\
& B &
\end{tikzcd}
\begin{tikzcd}
&P_0 \arrow[dl, "\id_{P_0}" above left] \arrow[dr, "m" above right] & \\
P_0 \arrow[rr, phantom, "="] \arrow[ddr, "\hat{p}_0" below left] \arrow[dr, "m"] & & P \arrow[dl, "\id_P" above left] \arrow[ddl, "\hat{p}"]  \\
& P \arrow[d, dashed, "\hat{p}"] & \\
& X &
\end{tikzcd}
\]

2.
If $\emptymor{B} \in S$, then $S \cup \{ \emptymor{B} \} = S$ so that the statement holds in this case.

Consider now $S$ with $\emptymor{B} \notin S$.
By definition, $\union(S)\colon \UNION(S) \to B$ is accompanied with a monomorphism $p''\colon P \to \UNION(S)$ for each $(p\colon P \to B) \in S$ such that $p = \union(S) \circ p''$ and, for each pair $(p\colon P \to B), (\overline{p}, \overline{P} \to B) \in S$ with a pullback
$(P \cap \overline{P}, p'\colon P \cap \overline{P} \to P, \overline{p}'\colon P \cap \overline{P} \to \overline{P})$
of $p$ and $\overline{p}$, $p'' \circ p' = \overline{p}'' \circ \overline{p}'$.
Now one can add $\emptymor{B}$ to $S$ and choose $\emptymor{\UNION(S)} \colon \INIT \to \UNION(S)$ as monomorphism corresponding to $\emptymor{B}$.
As the initial morphism is unique, one gets $\emptymor{B} = \union(S) \circ \emptymor{\UNION(S)}$ and
$p'' \circ \emptymor{P} = \emptymor{\UNION(S)} = \emptymor{\UNION(S)} \circ \id_{\INIT}$.
As pointed out in Point~\ref{item:intersection-new-1},
$(P \cap \INIT, \emptymor{P}, \id_{\INIT})$ is a pullback of $p$ and $\emptymor{B}$.
Altogether, this means that $\union(S)$ with morphisms $p''$ plus $\emptymor{\UNION(S)}$ equalizes all pullbacks in $\PB(S \cup \{ \emptymor{B} \})$.
Moreover, one can show that also the universal property of $\union(S \cup \{ \emptymor{B} \})$ is satisfied.
Let $X$ be an object with a morphism $\widehat{p} \colon P \to X$ for each $p \colon P \to B$
plus the only initial morphism $\emptymor{X} \colon \INIT \to X$ such that all pullbacks
in $\PB(S \cup \{ \emptymor{B} \})$ are equalized, i.e.,
($*$) $\widehat{p} \circ p' = \widehat{\overline{p}} \circ \overline{p}'$ for each $(P \cap P', p', \overline{p}') \in \PB(S)$
and $\widehat{p} \circ \emptymor{P} = \emptymor{X} \circ \id_{\INIT}$ for each
pullback $(P \cap \INIT, \emptymor{P}, \id_{\INIT})$ for $p\in S$ and $\emptymor{B}$.
Because \mbox{of ($*$),} the universal property of $\union(S)$ induces a morphism $m\colon \UNION(S) \to X$ with
$\widehat{p} = m \circ p''$ for all $p\in S$.
Moreover, the initiality of $\INIT$ yields $\emptymor{X} = m \circ \emptymor{\UNION(S)}$.
Summarizing, $\union(S)$ with the morphisms $p''$ plus $\emptymor{\UNION(S)}$ has
the property of $\union(S) \cup \{ \emptymor{B} \}$ so that they are equal as subobjects.
The situation is depicted in the following diagram.
\[
\begin{tikzcd}
& \INIT = P \cap \INIT \arrow[dl, "\emptymor{P}" above left] \arrow[dr, "\id_{\INIT}"] & \\
P \arrow[rr, phantom, "="] \arrow[dddr, bend right = 40, "p" below left] \arrow[dr, "p''"] \arrow[dddr, phantom, bend right = 30, "=" above right] \arrow[dr, "p''"] \arrow[ddr, "\hat{p}" below left] \arrow[ddr, phantom, "=" above right] & & \INIT \arrow[dl, "\emptymor{\UNION(S)}" above left] \arrow[ddl, "\emptymor{X}"] \arrow[ddl, phantom, "=" above left] \arrow[dddl, bend left = 40, "\emptymor{B}"] \arrow[dddl, phantom, bend left = 30, "=" above left]\\
& \UNION(S) \arrow[d, "m" left] \arrow[dd, bend left = 30, "\union(S)"] & \\
& X & \\
& B &
\end{tikzcd}
\]

3.
Using the notation of Point~\ref{item:union-6}, $\union(S)$ with the morphisms $p''$ for $p\in S$
equalizes all pullbacks in $\PB(S)$ and, in particular, all in $\PB(S_0)$ as $S_0 \subseteq S$.
Therefore, using the universal property of $\union(S_0)$, there is a morphism $m\colon \UNION(S_0) \to \UNION(S)$ with $\union(S) \circ m = \union(S_0)$.
As $\union(S_0)$ is a monomorphism, $m$ is a monomorphism proving $\union(S_0) \subseteq \union(S)$.
\end{proof}

\begin{example}
First of all, the category $\cat{Sets}$ with sets as objects and mappings as morphisms is an \eiu-category.
This follows from well-known set-theoretic and categorial properties.
The monomorphisms are the injective mappings.
Two of them with common codomain are equivalent if they have the same image.
Therefore, there is a one-to-one correspondence between subobjects of a set and its subsets,
and subobjects can be represented by the inclusions of subsets.
In particular, the finite sets are the finite objects.
The empty set $\emptyset$ is the initial object.
It has only itself as subset, and the initial morphism $\emptyset_B\colon \emptyset \to B$ is injective for every set $B$ so that $\emptyset_B$ is the empty subobject of $B$.
Given two subsets $P_1$ and $P_2$ of a set $B$, their set-theoretic intersection $P_1 \cap P_2$ together with the inclusion into $P_1$ and $P_2$ respectively is a pullback over the inclusions of $P_1$ and $P_2$ into $B$ and, therefore, the categorial intersection.
Moreover, let $S$ be a set of subsets of a set $B$.
Then the set-theoretic union $\bigcup\limits_{p \in S} P$ is the smallest subset of $B$ that contains each $P \in S$.
If $X$ is a set and $q_P\colon P \to X$ is a mapping for each $P \in S$ such $q_{P_1}$ and $q_{P_2}$ are equal on the intersection $P_1 \cap P_2$ for every pair $P_1,P_2 \in S$, then $m\colon \bigcup\limits_{P \in S} P \to X$ given by $m(y) = q_P(y)$ for $y \in P, P \in S$ is a mapping.
This proves that the inclusions of $\bigcup\limits_{P \in S} P$ into $B$ has the universal property required of $\union(S)$ so that the set-theoretic union turns out to represent the categorial union.

Based on $\cat{Sets}$, many further \eiu-categories can be derived (cf. Section~\ref{sec:further-look-at-the-categorial-framework}).
As a first example of this kind we consider the category $\sigmacat{Hypergraphs}$.
Its objects are $\Sigma$-hypergraphs and its morphisms are $\Sigma$-hypergraph morphisms defined as follows.
A $\Sigma$-\emph{hypergraph} $H = (V,E,att,l)$ over a given set $\Sigma$ of \emph{labels} is a system consisting of a set $V$ of \emph{vertices}, a set $E$ of \emph{hyperedges}, an \emph{attachment} mapping $att\colon E \to V^*$ (assigning a string of attachment vertices to each hyperedge) and a \emph{labeling} mapping $l\colon E \to \Sigma$.
The components of $H = (V,E,att,l)$ may also be denoted by $V_H$, $E_H$, $att_H$, and $l_H$ respectively.
The length of the attachment is called \emph{type}.
A \emph{hypergraph morphism} $f$ from $H = (V,E,att,l)$ to $H' = (V',E',att',l')$ is a pair $(f_V\colon V \to V', f_E\colon E\to E')$ of two mappings
such that
$f^*_V \circ att = att' \circ f_E$
and $l = l' \circ f_E$,
where $V^*$ is the set of all string over $V$ and $f_V^*\colon V^* \to V'^*$ is the canonical extension of $f_V$ to strings defined by $f^*_V(v_1\cdots v_n) = f_V(v_1) \cdots f_V(v_n)$ for all $v_1\cdots v_n \in V^*$.
$H$ is a sub-$\Sigma$-hypergraph of $H'$ if $V_H \subseteq V_{H'}, E_H \subseteq E_{H'}$
and the pair of inclusions $in=(in_V,in_E)$ is a hypergraph morphism.

It is not difficult to see that all the ingredients of \eiu-categories can be carried over from $\cat{Sets}$ to $\sigmacat{Hypergraphs}$ componentwise for vertices and hyperedges.
The monomorphisms are the pairs of injective mappings so that sub-$\Sigma$-hypergraphs correspond to subobjects, and finiteness is given by finite set components.
The empty $\Sigma$-hypergraph $\MPT = \emptyhypergraph$ is initial so that
$\emptymor{B} \colon \MPT \to B$ given by
$\emptyset_V \colon \emptyset \to V_B$
and
$\emptyset_E \colon \emptyset \to E_B$
is the empty subobject of each $\Sigma$-hypergraph $B$.
Analogously, intersection and union can be constructed componentwise.

For $p_i\colon P_1 \to B, i=1,2$
we have
$p_1 \cap p_2\colon P_1 \cap P_2 \to B$ with 
$P_1 \cap P_2 = (V_{P_1} \cap V_{P_2}, E_{P_1} \cap E_{P_2}, att_\cap, l_\cap)$,
$att_\cap(e) = att_{P_i}(e)$ and 
$l_\cap(e) = l_{p_i}(e)$ for all $e \in E_{P_1} \cap E_{P_2}$.
As $att_{P_1}$ and $att_{P_2}$ as well as $l_{P_1}$ and $l_{P_2}$ are equal on the intersection $E_{P_1} \cap E_{P_2}$, $att_\cap$ and $l_\cap$ are proper mappings.

For a set $S$ of sub-$\Sigma$-hypergraphs of a $\Sigma$-hypergraph $B$ we have
$\union(S)\colon \UNION(S) \to B$ with $\UNION(S) = (\bigcup\limits_{P \in S} V_P, \bigcup\limits_{P \in S} E_P, att_\cup, l_\cup), att_\cup(e) = att_p(e)$ and $l_\cup(e) = l_P(e)$ for all $e \in E_P, P \in S$.
As $att_{P_1}$ and  $att_{P_2}$ as well as  $l_{P_1}$ and  $l_{P_2}$ are equal on the intersection of $P_1$ and $P_2$, $att_\cup$ and $l_\cup$ are well-defined.

As a further example, we consider the category $\cat{Pos}$ of partially ordered sets (posets for short).
A poset (which can also be seen as simple acyclic transitive directed graph)
is a pair $(A,R)$ consisting of a set $A$ and a binary relation $R \subseteq A \times A$
subject to the conditions:
\begin{itemize}
\item
reflexivity, i.e., $(a,a) \in R$ for all $a \in A$,
\item
anti-symmetry, i.e., $(a,b),(b,a) \in R$ implies $a=b$ for all $a,b \in A$, and
\item
transitivity, i.e., $(a,b),(b,c) \in R$ implies $(a,c) \in R$ for all $a,b,c \in A$.
\end{itemize}
A morphism $f\colon (A,R) \to (A', R')$ is given by an order-preserving mapping $f\colon A \to A'$ meaning that 
$(f(a),f(b)) \in R'$ for all $(a,b) \in R$.
Composition and identity are the same as in $\cat{Sets}$.
A morphism $f$ is a monomorphism if and only if the underlying mapping is injective.
If $f\colon (A,R) \to (A',R')$ is a monomorphism, then the induced subobject of $(A',R')$ is represented by the poset $(f(A),f(R))$ with $f(R) = \{ (f(a),f(b)) \mid (a,b) \in R \}$.
Conversely, a poset $(A,R)$ is a subposet of the poset $(A',R')$ if $A \subseteq A'$ and $R \subseteq R'$,
denoted by $(A,R) \subseteq (A',R')$.
Then the inclusion $A \subseteq A'$ is a monomorphism such that $(A,R)$ represents a subobject of $(A',R')$.

The \emph{empty poset} $(\emptyset, \varepsilon)$, where $\varepsilon$ is the empty relation, is obviously an initial object in $\cat{Pos}$
such that the inclusion $\emptyset \subseteq A'$ provides the empty subobject $\emptymor{(A',R')}\colon (\emptyset, \varepsilon) \to (A',R')$ of each poset~$(A',R')$.

Given two subposets $(A_1,R_1),(A_2,R_2) \subseteq (A',R')$,
the intersection $(A_1,R_{1}) \cap (A_2,R_{2}) = (A_1 \cap A_2,\linebreak R_1 \cap R_2)$ is obviously a subposet of $(A',R')$ and -- together with the inclusions $(A_1 \cap A_2, R_1 \cap R_2) \subseteq (A_i,R_i)$ for $i=1,2$ -- the pullback of $(A_i,R_i) \subseteq (A',R')$ for $i=1,2$.

Given a finite set $S$ of subposets of $(A',R')$.
The union $(\bigcup\limits_{(A,R) \in S} A, \trans(\bigcup\limits_{(A,R) \in S} R))$ where
$\trans(\overline{R})$ is the transitive closure for $\overline{R} \subseteq R'$ is obviously the smallest subposet of $(A',R')$ that includes all $(A,R) \in S$ and, therefore, all pairwise intersections, too.
Altogether, this shows that $\cat{Pos}$ is an \eiu-category.
\end{example}

It may be noted that one encounters well-known categorical concepts in
the literature that are closely related to \eiu-categories (see, e.g.,~\cite{Lack-Sobocinski:05}).
\begin{enumerate}
\item
Strict initial objects are relevant for relating adhesive and extensive categories.
An initial object is \emph{strict} if every morphism into it is an
isomorphism. This implies that the initial morphisms are monomorphisms
such that the initial morphism from a strict initial object into some
object $B$ represents a subobject of $B$. The converse does not hold as the
category of pointed sets shows (cf. Example~3.8 in~\cite{Lack-Sobocinski:05}).

\item
In adhesive categories, the binary union of subobjects can be defined by
the pushout of the intersection (see Theorem~5.1 in~\cite{Lack-Sobocinski:05}).
Repeating the construction, one obtains
an iterated union of a finite set of subobjects.
It is open whether this iterated union coincides with our union.
If this is the case, then adhesive categories with empty subobjects would be \eiu-categories.
The converse does not hold as the category $\cat{Pos}$ is an \eiu-category, but it is not
adhesive (cf. Example~3.4 in~\cite{Lack-Sobocinski:05}).
\end{enumerate}

\section{Reaction Systems over \eiu-categories}
\label{sec:reaction-systems-cats}
In this section, we introduce the notion of reaction systems over an \eiu-category.
This can be done in a straightforward way
by replacing every occurrence of
“(sub)set/(sub)graph” in the definition of set/graph-based reaction
systems by “(sub)object” with one exception: the enabledness with
respect to the inhibitor.
The graph-based inhibitor (consisting of sets of
vertices and edges) has not a direct counterpart as categorical objects
do not provide explicit internal information like vertices and edges of graphs.
Therefore, we replace it by a subobject $i \colon I \to B$ of the background
like reactant and product accompanied by a subobject $i_0 \colon I_0 \to I$.
This allows to require that the intersection of $i$ and a current state is
included in $i_0$ so that the "complement" of $i$ and $i_0$ is forbidden.

\subsection{Reaction Systems over $\cat{C}$}
Let $\cat{C}$ be an \eiu-category.
Then we can define reaction systems over $\cat{C}$ in a way analogous to set-based and graph-based reaction systems.

\begin{definition}
\label{def:rs-over-c}
\rm
  \begin{enumerate}
  \item
    Let $B$ be a finite object in $\cat{C}$.
   A  \emph{reaction} over $B$ is a triple $a= \reaction B r R i I {i_0} {I_0} p P$
   where $r$ and $p$ are non-empty subobjects of $B$, $i$ is a subobject of $B$ and $i_0$ is a subobject of $I$.
   The subobject $r$ is called \emph{reactant}, the pair $(i,i_0)$ is called  \emph{inhibitor}, and $p$ is called \emph{product}. $r$, $(i,i_0)$ and $p$ may also be denoted by $r_a$, $(i_a,(i_0)_a) $ and $p_a$, respectively.

\item
A \emph{state} is a subobject of $B$.

  \item
A reaction  $a= \reaction B r R i I {i_0} {I_0} p P$ is \emph{enabled} on a state $t\colon T \to B$, denoted by $\en_a(t)$, if
$r \subseteq t$ and $t \cap i \subseteq i \circ i_0$, i.e.,
there is a monomorphism $s\colon R \rightarrow T$ with $r=t \circ s$ and, for the intersection $(T \cap I,i',t')$ of $t$ and $i$, there is a monomorphism $s'\colon T \cap I \rightarrow I_0$ with $t \cap i = i \circ i_0 \circ s'$.
   \[
    \begin{tikzcd}
    & & & T \cap I \arrow[r, "s'", dashed] \arrow[d, "i'"] \arrow[dl, "t'" above] \arrow[ddl, "t\cap i" left] \arrow[dl, phantom, "~=" below] \arrow[ddl, phantom, "=~~" below right] & I_0 \arrow[dl, "i_0"] \arrow[dl, "\text{\normalsize =}" above left] \\
      & R \arrow[dr, "r" below left] \arrow[dr, phantom, "=" above right] \arrow[r, "s", dashed] & T \arrow[d, "t"] & I \arrow[dl, "i"] \\ 
      & & B &
    \end{tikzcd}
    \]

\item
The \emph{result} of a reaction $a$ on a state $t$ is $\res_a(t)=p_a$ for $\en_a(t)$ and $\res_a(t) = \emptymor{B}$ otherwise.

\item
  Given a state $t\colon T \rightarrow B$, the \emph{result} of a set of reactions $A$ on $t$ is $\res_A(t)=\union(\{ \res_a(t) \mid a \in A \})$.
  
\item
A \emph{reaction system over} $\cat{C}$ is a pair $\mathcal{A} = (B, A)$
    consisting of some finite object $B$, called \emph{background}, and a finite set $A$ of reactions over $B$.

 \item
 Given a state $t\colon T \rightarrow B$, the \emph{result} of ${\cal A}$ on $t$ is the result of $A$ on $t$.
 It is denoted by $\res_{\cal A}(t)$.
  \end{enumerate}
\end{definition}

\begin{remark}
\label{remark:properties-rs}
Some basic properties of enabledness and results which are known for set- and graph-based reaction systems carry over to reaction systems over a category.
\begin{enumerate}
\item
A current state vanishes completely.
But it or some subobject of it may be reproduced by the products of enabled reactions.

\item
$\res_{\mathcal{A}}(t)$ is uniquely defined for every state $t$ so that
$\res_{\mathcal{A}}(t)$ is a function on the set of states of $B$.

\item
All reactions contribute to $\res_{\mathcal{A}}(t)$ in a maximally parallel and cumulative way.
There is never any conflict.

\item
As the addition of the empty subobject to a union of subobjects does not change the union,
$\res_A(t) = \res_{\{ a \in A \mid \en_a(t)\}}(t)$
holds for all states $t$.

\item
As the intersection of a subobject and the empty subobject is empty, a reaction with an empty inhibitor, i.e., $a = (r, (\emptymor{B}, \id_{\INIT}), p)$
is enabled on a state $t$ if $r \subseteq t$.
The empty inhibitor has no effect.
Therefore, the reaction is called {\em uninhibited}.

\end{enumerate}

\end{remark}

Set-based and graph-based reaction systems can be transformed into reaction systems over the categories of sets and graphs, respectively.
The transformations preserve the semantics so that set-based and graph-based reaction systems fit fully into the categorical framework.
In the following we discuss two reaction systems over $\sigmacat{Hypergraphs}$.

\subsection{Two Reaction Systems over $\sigmacat{Hypergraphs}$}
\label{sect-reaction-systems-hypergraph}
\newcommand{\rshypergraphcoloringsetflag}[2]{
\begin{tikzpicture}[baseline=-2pt]
\node[state] (1) at (0,0) {$#1$};
\node[fill=white,draw=black,rectangle] (flag) at (0.9,0) {$\mathit{#2}$};
\path (1) edge node[midway, above] {\scriptsize $1$} (flag);
\end{tikzpicture}
}

\newcommand{\rshypergraphcoloringexamplegraph}{
\foreach \pos/\x/\lab in {{(0:1.17cm)/3/3},{(72:1.17cm)/2/2},{(144:1.17cm)/1/1},{(216:1.17cm)/5/5},{(288:1.17cm)/4/4}}
{
  \node[state] (\x) at \pos {$\lab$};
}
\foreach \pos/\x/\lab in {
{(80:0.5cm)/b1/\bullet},
{(320:0.5cm)/b2/\bullet},
{(200:0.5cm)/b3/\bullet}}
{
  \node[fill=white, inner sep=0pt] (\x) at \pos {$\lab$};
}

\path (1) edge node[midway, above] {\scriptsize $1$} (b1);
\path (2) edge node[midway, right] {\scriptsize $2$} (b1);
\path (3) edge node[midway, above] {\scriptsize $3$} (b1);

\path (1) edge node[midway, above] {\scriptsize $1$} (b2);
\path (3) edge node[midway, below] {\scriptsize $2$} (b2);
\path (4) edge node[midway, right] {\scriptsize $3$} (b2);

\path (1) edge node[midway, below] {\scriptsize $1$} (b3);
\path (5) edge node[midway, right] {\scriptsize $2$} (b3);
\path (4) edge node[midway, above] {\scriptsize $3$} (b3);
}

\newcommand{\rshyperedgecoveringexamplegraph}{
\foreach \pos/\x/\lab in {{(0:1.17cm)/3/3},{(72:1.17cm)/2/2},{(144:1.17cm)/1/1},{(216:1.17cm)/5/5},{(288:1.17cm)/4/4}}
{
  \node[state] (\x) at \pos {$\lab$};
}
\foreach \pos/\x/\lab in {
{(80:0.5cm)/b1/\bullet},
{(320:0.5cm)/b2/\bullet},
{(200:0.5cm)/b3/\bullet}}
{
  \node[fill=white,,draw=black,rectangle, inner sep=2pt] (\x) at \pos {$*$};
}
}

As a first example, we model a vertex-coverability test by a family of reaction systems over the category $\sigmacat{Hypergraphs}$.

Let $H = (V,E,att, l)$ be a $\Sigma$-hypergraph with $l(e)=*$ for some label $* \in \Sigma$ for all $e \in E$
(this means that all hyperedges are equally labeled and, hence, can be considered as unlabeled).
Then $X \subseteq V$ is a \emph{vertex cover} of $H$ if each hyperedge has some attachment vertex in $X$.
$H$ is $k$-\emph{vertex-coverable} for some $k \in \Nat$ if there is a hyperedge vertex cover of $H$ with $k$ elements.

The $k$-vertex-coverability test employs the reaction system $\mathcal{A}_{m,n} = (B_{m,n}, \allowbreak A_{m,n})$ for some $m,n \in \Nat$ with $m \le n$ defined as follows.
Let $\binomb{n}{m}$ be the set of all strings over $[n]$ of lengths up to $m$.
Then the \emph{complete hypergraph with twins}
is defined by $CH^{(2)}_{m,n} = ([n], \binomb{n}{m} \times \{ *, + \}, attach, lab)$ with $attach(u,*) = attach(u,+)  = u$ and $lab(u,*) = *$ and $lab(u,+) = +$ for all $u \in \binomb{n}{m}$.
The two parallel hyperedges $(u,*)$ and $(u,+)$ for $u \in \binomb{n}{m}$ are called \emph{twins}.
The background hypergraph $B_{m,n}$ is $CH^{(2)}_{m,n}$ extended by a $*$-flag (type-1 hyperedge) at each vertex.
The set of reactions $A_{m,n}$ contains the following elements,
where, due to the one-to-one correspondence of categorial subobjects of a $\Sigma$-hypergraph and sub-$\Sigma$-hypergraphs, the subobjects are represented by the domain objects of the inclusion morphisms.
The symbol ``$-$'' is a shortcut for the inhibitor $(\emptymor{B_{m,n}}, \id_{\MPT})$.
\begin{enumerate}
\item $(\begin{tikzpicture}[->,>=stealth',shorten >=1pt,auto,node distance=1.8cm,baseline = -3pt]
  
  \node[state] (j)                    {$j$};
\end{tikzpicture}
,-, )$ for all $j \in [n]$.
\item $(e^\bullet,-, e^\bullet)$ for all $e \in \binomb{n}{m} \times \{ *, + \}$ where $e^\bullet$ is the sub-$\Sigma$-hypergraph of $B_{m,n}$ induced by $e$, i.e., $e^\bullet = (\{v_1,\ldots, v_l \}, \{e\}, attach|_{\{e\}}, lab|_{\{e\}})$ with $attach(e) = v_1\cdots v_l$, $v_j \in [n]$ for $j=1,\ldots, l$.
\item $(\rshypergraphcoloringsetflag{j}{*},-, \rshypergraphcoloringsetflag{j}{*})$ for all $j \in [n]$.
\item $((u,*)^\bullet \cup v^\bullet,-, (u,+)^\bullet)$ for all $u \in \binomb{n}{m}$ and $v \in V$ occurring in $u$
  where $v^\bullet$ is the sub-$\Sigma$-hypergraph of $B_{m,n}$ with the vertex $v$ and a $*$-flag at $v$.
\end{enumerate}
The first three types of reactions applied to a state make sure that the state is sustained.
The only changing reactions are of the fourth type.
They add a $+$-labeled twin hyperedge whenever some attachment vertex of a $*$-labeled hyperedge has a $*$-flag.
In the drawings, a circle represents a vertex and a box a flag. The label is inside the box, and a line from a box to a circle represents the attachment.

The modeling is continued in Section~\ref{sec:reaction-systems-hypergraph-interactive-process}.

\bigskip

The second example is less interesting from a computational
point of view, but serves to illustrate how non-trivial inhibitors work.
Consider $CH^{(2)}_{m,n}$ as background and the following reactions:
$a(e)=(e^\bullet , \{v_1,\ldots,v_l\} \subset \hat{e}^\bullet , e^\bullet)$
for all $e \in \binomb{n}{m}\times \{*\}$ with $attach(e)=v_1\cdots v_l$ where
$\hat{e}$ is the twin of $e, e^\bullet$ and $\hat{e}^\bullet$ are
defined as in Point~1, and $\{v_1,\ldots,v_l\}$ represents the discrete hypergraph with
the attachment vertices of $e$ as vertices.
A reaction $a(e)$ is enabled on
some state $H$ if $e \in E_H$ and $\hat{e} \notin E_H$.
In other words,
the application of all reactions sustains all $*$-hyperedges of $H$ that are
not accompanied by their twins.

\subsection{Interactive Processes}

The definition of reaction systems over a category is chosen in such a way that the semantic notion of interactive processes can be carried over directly from the set-based and graph-based cases.

\begin{definition}
\begin{enumerate}
\item
Let $\mathcal{A}  = (B, A)$ be a reaction system over $\cat{C}$.
An \emph{interactive process} $\pi =(\gamma,\delta)$ on~$\mathcal{A}$ consists of two sequences of subobjects of $B$ $\gamma = c_0,\dots,c_n$ and $\delta = d_0,\dots,d_n$ for some $n \geq 1$ such that $d_i = \res_{\cal A}(c_{i-1} \cup d_{i-1})$ for $i=1,\dots,n$.
The sequence $\gamma$ is called \emph{context sequence}, the sequence $\delta$ is called \emph{result sequence}
where $d_0$ is called \emph{start}, and the sequence $\tau = t_0, \dots, t_n$ with $t_i = c_i \cup d_i$ for $i=0,\ldots,n$ is called \emph{state sequence}.

\item
$\pi$ is called \emph{context-independent} if $c_i \subseteq d_i$ for $i=0,\ldots,n$.

\end{enumerate}
\end{definition}

\begin{remark}
\label{remark:properties-interactive-processes}
Consider a context-independent process
$\pi=(c_0,\ldots,c_n, d_0,\ldots,\allowbreak d_n)$.
\begin{enumerate}
\item
Using point~\ref{item:union-7} of Properties~\ref{properties:empty-objects-intersection-union} in the
previous section, $c_i \subseteq d_i$ for $i=0,\ldots,n$ implies $t_i=c_i \cup d_i=d_i$
meaning that the result sequence and state sequence coincide and that
the state sequence describes the whole process determined by its initial
state $t_0 = d_0$.
Therefore, whenever context-independent processes are
considered, one can focus on their state sequences.

\item
Let $\tau = t_0,\ldots,t_n$ for some $n\ge 1$ be a state sequence.
Then $\tau$ is either \emph{repetition-free}, i.e., $t_i \ne t_j$ for all $i,j$ with $0 \le i < j \le n$,
or there is a smallest pair $t_{i_0},t_{j_0}$ with $0 \le i_0 < j_0 \le n$ and $t_{i_0} = t_{j_0}$ such that
$\tau = t_0,\ldots,t_{i_0}, (t_{i_0 + 1},\ldots, t_{j_0})^m t_k,\ldots, t_n$ for some $m \in \Nat$ where $k=i_0 + 1 + m(j_0 - i_0) + 1$
and $t_k,\ldots, t_n$ is an initial section of $t_{i_0 + 1},\ldots, t_{j_0}$.
According to the choice of $i_0$ and $j_0$, the section $t_0,\ldots, t_{j_0 -1}$ is repetition-free.

\item
Using the pigeonhole principle, the pair $i_0,j_0$ exists if $n-1$ is greater than the number of states.
Therefore, every state sequence runs into a unique cycle eventually.

\end{enumerate}
\end{remark}


\subsection{An Interactive Process for $\sigmacat{Hypergraphs}$}
\label{sec:reaction-systems-hypergraph-interactive-process}
Let $H \subseteq CH^{(2)}_{m,n}$ be a sub-$\Sigma$-hypergraph with $*$-labeled hyperedges only.
Let $i_1,\ldots,i_k$ be a combination of $k$ elements of $[n]$ for some $k \in \Nat$.
Then one can consider the interactive process $\pi(H,i_1 \cdots i_k) = (\gamma(H,i_1\cdots i_k), \delta(H, i_1\cdots i_k))$ with
$\gamma(H, i_1\cdots i_k) = \rshypergraphcoloringsetflag{i_1}{*}, \ldots,\allowbreak \rshypergraphcoloringsetflag{i_k}{*}, \MPT$ and $H$ as start.
Then $\{i_1,\ldots,i_k \}$ is a $k$-vertex-cover of $H$ if and only if each hyperedge of $H$ has a twin in the final result.
Consequently, to test whether $H$ is $k$-vertex-coverable, one may run the interactive process $\pi(H, i_1\cdots i_k)$ for all combinations of $k$ elements of~$[n]$.

\begin{example}
Let $\gamma(B_{3,5},2,4) =\rshypergraphcoloringsetflag{2}{*}, \rshypergraphcoloringsetflag{4}{*} ,\MPT$.
The result sequence is
\begin{center}
\begin{tikzpicture}
  \rshyperedgecoveringexamplegraph

\path (1) edge node[midway, above] {\scriptsize $1$} (b1);
\path (2) edge node[midway, right] {\scriptsize $2$} (b1);
\path (3) edge node[midway, above] {\scriptsize $3$} (b1);

\path (1) edge node[midway, above] {\scriptsize $1$} (b2);
\path (3) edge node[midway, below] {\scriptsize $2$} (b2);
\path (4) edge node[midway, right] {\scriptsize $3$} (b2);

\path (1) edge node[midway, below] {\scriptsize $1$} (b3);
\path (5) edge node[midway, right] {\scriptsize $2$} (b3);
\path (4) edge node[midway, above] {\scriptsize $3$} (b3);

\end{tikzpicture}
,
\begin{tikzpicture}
  \rshyperedgecoveringexamplegraph

\path (1) edge  (b1);
\path (2) edge  (b1);
\path (3) edge  (b1);

\path (1) edge  (b2);
\path (3) edge  (b2);
\path (4) edge  (b2);

\path (1) edge  (b3);
\path (5) edge  (b3);
\path (4) edge  (b3);

\node[fill=white,draw=black,rectangle] (flag1)[right = 0.25 of 2] {$*$};
\path (2) edge  (flag1);

\foreach \pos/\x in {
{(100:0.8cm)/b1}}
{
  \node[fill=white,,draw=black,rectangle, inner sep=1pt] (\x) at \pos {$+$};
}

\path (1) edge (b1);
\path (2) edge (b1);
\path (3) edge (b1);

\end{tikzpicture}
,
\begin{tikzpicture}
  \rshyperedgecoveringexamplegraph

\path (1) edge  (b1);
\path (2) edge  (b1);
\path (3) edge  (b1);

\path (1) edge  (b2);
\path (3) edge  (b2);
\path (4) edge  (b2);

\path (1) edge  (b3);
\path (5) edge  (b3);
\path (4) edge  (b3);

\node[fill=white,draw=black,rectangle] (flag1)[right = 0.25 of 2] {$*$};
\path (2) edge  (flag1);

\node[fill=white,draw=black,rectangle] (flag1)[right = 0.25 of 4] {$*$};
\path (4) edge  (flag1);

\foreach \pos/\x in {
{(100:0.8cm)/b1},
{(350:0.2cm)/b2},
{(230:0.75cm)/b3}}
{
  \node[fill=white,,draw=black,rectangle, inner sep=1pt] (\x) at \pos {$+$};
}

\path (1) edge (b1);
\path (2) edge (b1);
\path (3) edge (b1);

\path (1) edge (b2);
\path (3) edge (b2);
\path (4) edge (b2);

\path (1) edge (b3);
\path (4) edge (b3);
\path (5) edge (b3);

\end{tikzpicture}
.
\end{center}
The lines connecting a box with vertex circle provide the attachment where the numbering establishes its order.
In the second and third hypergraph the numbering is omitted to clarify the drawing.
$c_0$ enables the reaction $((123,*)^\bullet \cup 2^\bullet,-, (123,+)^\bullet)$ and
$c_1$ enables the reaction $((134,*)^\bullet \cup 4^\bullet,-, (134,+)^\bullet)$ as well as the reaction $((145,*)^\bullet \cup 4^\bullet,-, (145,+)^\bullet)$.

Note that it is also possible to choose both in parallel, e.g., choose $c'_0$ to be $c_0 \cup c_1 = \rshypergraphcoloringsetflag{2}{*}~\rshypergraphcoloringsetflag{4}{*}$
and $c'_1 = \MPT$ meaning that the test can be done in one step.
\end{example}

\section{Diagram Categories are \eiu-categories}
\label{sec:further-look-at-the-categorial-framework}
Many categories follow a common building principle, called diagram categories, providing a reservoir of potential example categories over which reaction systems are defined because certain diagram categories turn out to be
\eiu-categories if the underlying category is an \eiu-category.


Let $\scheme = (C,A,s\colon A \to C, t\colon A \to C)$ be a directed graph (without labeling), called \emph{scheme}, where the vertices are also called \emph{components} and the edges \emph{arrows}.
Then $\scheme$ induces the \emph{diagram category} $\diagramcat{C}{\scheme}$ over $\cat{C}$.
Its objects are graph morphisms $\delta\colon \scheme \to gr(\cat{C})$, where the domain is the scheme $\scheme$ and the codomain is the underlying graph of the category $\cat{C}$, i.e.,
$gr(\cat{C}) = (\Ob_{\cat{C}}, \sum\limits_{X,Y \in \Ob_{\cat{C}}} \Mor_{\cat{C}}(X,Y), \hat{s}, \hat{t})$ with objects of $\cat{C}$ as vertices and the disjoint union of all sets of morphisms as set of edges,
and $\hat{s}(f\colon X \to Y) = X$ and $\hat{t}(f\colon X \to Y) = Y$
for all $f \in \Mor_{\cat{C}}(X,Y)$ and all $X,Y \in \Ob_{\cat{C}}$.
The objects of $\diagramcat{C}{\scheme}$ are called \emph{diagrams}.
Given two diagrams $\delta,\delta'\colon \scheme \to gr(\cat{C})$,
a morphism $g \colon \delta \to \delta'$ is given by a family of $\cat{C}$-morphisms
$\{g_c \colon \delta_V(c) \to \delta'_V(c)\}_{c\in C}$ such that
$g_{t(a)} \circ \delta_E(a) = \delta'_E(a) \circ g_{s(a)}$ for all $a \in A$.
This means that the following diagram commutes: 
\[
\begin{tikzcd}
\delta_V(s(a)) \arrow[d, "g_{s(a)}"] \arrow[r, "\delta_E(a)"] \arrow[dr, phantom, "=" description] & \delta_V(t(a)) \arrow[d, "g_{t(a)}"]\\
\delta'_V(s(a)) \arrow[r, "\delta'_E(a)"] & \delta'_V(t(a))
\end{tikzcd}
\]

The composition and the identities are defined componentwise in the category $\cat{C}$.
The components of $\scheme$ are placeholders for objects, the arrows for morphisms.
To avoid an extra handling of labeling and typing functions or such, we also allow fixed components meaning that such a component is instantiated by some fixed object in each diagram
and each morphism in a fixed component is always the identity.

Schemes may be drawn in the usual way:
Bullets represent components connected by arrows from source bullet to target bullet each.
In the case of a fixed component, the bullet is replaced by the associated fixed object.

Often used categories turn out to be diagram categories:
\begin{enumerate}
\item
The product category $\cat{Sets} \times \cat{Sets} = \diagramcat{Sets}{\bullet \bullet}$ of ordered pairs of sets.

\item
The category $\sigmacat{Sets}= \diagramcat{Sets}{\bullet \rightarrow \Sigma}$ of $\Sigma$-labeled sets for some alphabet $\Sigma$.

\item
The category $\cat{Maps} = \diagramcat{Sets}{\bullet \rightarrow \bullet}$ of mappings.

\item
The category $\cat{Graphs} = \diagramcat{Sets}{\bullet \xrightrightarrows[]{} \bullet}$ of directed (unlabeled) graphs.

\item
The category
$\sigmacat{Graphs} = \diagramcat{Sets}{\Sigma \leftarrow \bullet \xrightrightarrows{} \bullet}$
of $\Sigma$-graphs for some alphabet $\Sigma$.

\item The category $(\Sigma_V, \Sigma_E)$-$\cat{Graphs} = \diagramcat{Sets}{\Sigma_E \leftarrow \bullet \xrightrightarrows[]{} \bullet \rightarrow \Sigma_V}$
of directed vertex- and edge-labeled graphs.

\item The category $\cat{BipartiteGraphs} =  \diagramcat{Sets}{\begin{tikzpicture}[>= stealth',every state/.style = {circle,draw,minimum size = 1cm, inner sep = 0cm,fill}, every edge/.style = {draw,inner sep = 1.5pt, line width=0.1mm},on grid,inner sep = 0.03cm]

  \node [circle,fill=black] (1) at (0,0) {};
  \node [circle,fill=black] (2) at (0.4,0) {};
  \node [circle,fill=black] (3) at (0,0.3) {};
  \node [circle,fill=black] (4) at (0.4,0.3) {};

  \path (1) edge[arrows={-to}] (3);
  \path (1) edge[arrows={-to}] (4);
  \path (2) edge[arrows={-to}] (4);
  \path (2) edge[arrows={-to}] (3);
\end{tikzpicture}}$ of bipartite directed graphs.
Let $G = (V_1,V_2,E_1,E_2,\allowbreak s_1\colon E_1 \to V_1,s_2\colon E_2 \to V_2,t_1\colon E_1 \to V_2,t_2\colon E_2 \to V_1)$ be an object.
There are two sets of vertices and two sets of edges.
Edges have sources in $V_1$ and targets in $V_2$ or the other way round.

\item The category $3$-$\cat{Hypergraphs} = \diagramcat{Sets}{\bullet \threerightarrows \bullet}$  of hypergraphs with hyperedges of type 3.
Let the three arrows be $l,r,t$ respectively, and let $H = (V,E,l_H, r_H, t_H)$ be an object.
Then each $e \in E$ is attached to a ``left'', a ``right'', and a ``top'' vertex so that $e$ can be seen as a triangle.

\item 
The category $4$-$\cat{Hypergraphs} = \diagramcat{Sets}{\bullet \fourrightarrows \bullet}$  of hypergraphs with hyperedges of type 4.
Let the four arrows be $\mathit{north},\mathit{east},\mathit{south},\mathit{west}$, then the hyperedges are of type 4 and can be seen as ``cells'' with ``tentacles'' to the respective directions.

\item
  An interesting example where the underlying category is not $\cat{Sets}$ is the category
  $\typedgraphsdiagramcat{}$ of $\mathit{TG}$-\emph{typed graphs}
  for some \emph{type graph} $\mathit{TG}$.
  They are often used in the area of graph transformation as a well-working generalization of labeled graphs.
  A $\mathit{TG}$-typed graph is represented by a pair $(G,t)$, where $G$ is a directed (unlabeled) graph and $t\colon G \to \mathit{TG}$ is a graph morphism specifying the structure of $G$.
  A $\mathit{TG}$-type-graph morphism $f\colon (G_1, t_1) \to (G_2, t_2)$ is a graph morphism \mbox{$f_G\colon G_1 \to G_2$} such that $t_2 \circ f_G = t_1$.

  Indeed, $\sigmacat{Graphs}$ is in a one-to-one correspondence to $\typedgraphsdiagramcat{(\Sigma)}$ where $\mathit{TG}(\Sigma)$ has a single vertex and, for each $x \in \Sigma$, an $x$-labeled loop at the vertex.
  Similarly, $(\Sigma_V, \Sigma_E)$-$\cat{Graphs}$ is in a one-to-one correspondence to $\typedgraphsdiagramcat{(\Sigma_V,\Sigma_E)}$
  where $\mathit{TG}(\Sigma_V,\Sigma_E) = (\Sigma_V, \Sigma_V \times \Sigma_E \times \Sigma_V, pr_1, pr_3)$ with the first and third projections $pr_1$ and $pr_3$ as source and target mappings respectively.

\end{enumerate}

Concerning diagram categories, it may be noted that categories of the form
$\diagramcat{C}{\bullet \to X}$ for some fixed object $X$ are also called \emph{slice categories}.
Two of our examples,
$\sigmacat{Sets} = \diagramcat{Sets}{\bullet \rightarrow \Sigma}$ and
$\cat{TypedGraphs} = \diagramcat{Graphs}{\bullet \rightarrow \mathit{TG}}$ 
are slice categories.

The main result of this section is that 
diagram categories are \eiu-categories
if the underlying category is an \eiu-category and the fixed components of the considered schemes have no out-going arrows.

\begin{theorem}\label{thm:diagramcats-of-eiu-cats-are-eiu-cats}
Let $\cat{C}$ be an \eiu-category
and $\scheme$ be a scheme where no fixed component is a source of an arrow.
Then $\diagramcat{C}{\scheme}$ is an \eiu-category.
\end{theorem}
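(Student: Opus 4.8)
The plan is to verify each of the three existence requirements and the two monomorphism side-conditions of an \eiu-category for $\diagramcat{C}{\scheme}$ by carrying out every construction componentwise, reducing it to the \eiu-structure of $\cat{C}$ at the variable components while pinning each fixed component to its prescribed object. As preparation I would record the componentwise descriptions of the basic notions. A morphism $g$ of $\diagramcat{C}{\scheme}$ is a monomorphism if and only if each component $g_c$ is a monomorphism in $\cat{C}$: pointwise mono clearly implies mono, and the converse follows since monomorphisms are detected by self-pullbacks and pullbacks in $\diagramcat{C}{\scheme}$ are, as shown below, formed componentwise. Consequently a subobject of a diagram $B$ amounts to a compatible family of subobjects of the objects $B_V(c)$. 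It then remains to check the finiteness bookkeeping, namely that finiteness of $B$ makes each variable $B_V(c)$ finite in $\cat{C}$; this is what makes the \eiu-axioms of $\cat{C}$ applicable at every variable component.

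For the initial object I would take the diagram $\mathcal{I}$ with $\mathcal{I}_V(c) = \INIT$ at every variable component and $\mathcal{I}_V(c)$ equal to the prescribed fixed object at every fixed component. The unique morphism into an arbitrary diagram $\delta$ has components $\init{\delta_V(c)}$ at variable $c$ and the identity at fixed $c$; the comparison squares commute automatically because any two morphisms out of $\INIT$ coincide, and uniqueness is forced by initiality at the variable components and by the identity constraint at the fixed ones. The empty-subobject conditions follow immediately: the components of the initial morphism are monomorphisms, so it is one; and a subobject of $\mathcal{I}$ is componentwise a subobject of $\INIT$ (hence an isomorphism, since $\INIT$ has only itself as subobject in $\cat{C}$) or of a fixed object pinned to the identity, hence componentwise an isomorphism, so $\mathcal{I}$ has only itself as a subobject.

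For pullbacks and for the colimits $\COLIMIT(\PB(S))$ I would again build the object componentwise: at a variable component take the pullback, respectively the colimit, guaranteed by the \eiu-structure of $\cat{C}$; at a fixed component every subobject restricts to the identity, so the componentwise colimit is $\COLIMIT(\PB(\{\id_X\})) = X$ by the second special case listed after the colimit definition, and the pullback is likewise $X$. The arrow morphisms of the constructed diagram are then supplied by the universal properties: for an arrow with variable target the induced map between the componentwise pullbacks or colimits is standard, and for an arrow with fixed target the universal property yields the unique comparison into $X$, using that the arrow morphisms of $\PB(S)$ assemble into a cocone over $X$. The global universal properties, together with the facts that the pullback projections and the union morphism $\union(S)\colon \UNION(S) \to B$ are monomorphisms, then transfer componentwise from $\cat{C}$, settling the two remaining side-conditions.

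I expect the main obstacle to be precisely the point where the hypothesis on $\scheme$ enters, namely that these componentwise data assemble into legitimate objects of $\diagramcat{C}{\scheme}$. A valid diagram must carry an arrow morphism $\mathcal{I}_E(a)\colon \mathcal{I}_V(s(a)) \to \mathcal{I}_V(t(a))$ for every arrow $a$, and likewise for the pullback and colimit objects; such a morphism emanates from the value at the \emph{source} of $a$. If a fixed component were allowed to be a source, one would have to produce, for instance in the initial object, a morphism $X \to \INIT$ out of the pinned fixed object into the initial value at a variable target, and no such morphism need exist. The assumption that no fixed component is a source of an arrow removes this difficulty: every source is variable, so its constructed value is an initial object, a pullback, or a colimit that by its universal property admits the required outgoing comparison into whatever object — variable or fixed — sits at the target. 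Once this is secured, the remaining verifications are the routine componentwise transfers sketched above, and one concludes that $\diagramcat{C}{\scheme}$ is an \eiu-category.
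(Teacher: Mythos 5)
Your proposal is correct and takes essentially the same approach as the paper: the paper's proof is a three-sentence sketch appealing to the known componentwise construction of (co)limits in diagram categories without fixed components, combined with the slice-category argument for fixed components, and this combined componentwise construction is exactly what you carry out in detail. Your write-up is in fact more explicit than the paper's on the one point the paper leaves entirely implicit — where the hypothesis that no fixed component is a source of an arrow enters (namely, to supply the arrow morphisms out of the componentwise initial, pullback, and colimit objects, e.g.\ avoiding the need for a nonexistent morphism $X \to \INIT$) — while the one step you flag but defer (that finiteness of a diagram forces finiteness of its variable components in $\cat{C}$, so that the \eiu-axioms of $\cat{C}$ are applicable) is likewise not addressed in the paper's proof.
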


\begin{proof}
It is known that limits and colimits in diagram categories without fixed
components can be constructed componentwise by limits and colimits of
the underlying category. It is also known that limits and colimits in
slice categories (with a scheme of the form $\bullet \to \Sigma$) can be
constructed by the limits and colimits of the free component in the
underlying category. The statement can be proved for diagram categories
with fixed components in the same way by combining the arguments for the
two known cases.
\end{proof}

\begin{remark}
The proof of the theorem is not only analogous to the proof for
diagram categories without fixed components and slice categories, but it
also may be that a diagram category with fixed components is isomorphic
to a slice category with an underlying diagram category witout fixed
components so that the theorem follows directly from the known results.
\end{remark}

If one allows to replace a bullet in a scheme $\scheme$ by a $*$ and uses it in $\diagramcat{Sets}{\scheme}$
in such a way that the $*$ is not replaced by a set $X$, but by the set of strings $X^*$ over $X$, then even the category of $\sigmacat{Hypergraphs}$ can be obtained as a diagram category:
$\sigmacat{Hypergraphs} = \diagramcat{Sets}{\Sigma \leftarrow \bullet \rightarrow{} *}$.
We know already that $\sigmacat{Hypergraphs}$ is an \eiu-category.
But it is open whether Theorem~\ref{thm:diagramcats-of-eiu-cats-are-eiu-cats} holds using this kind of schemes.

It is not difficult to see that all our explicit
examples listed above and many like these meet the assumptions of the
theorem.

\section{Towards a Category of Reaction Systems over a Category}
\label{sec:category-reaction-systems-over-cat}
So far, everything we have discussed concerns reactions systems
over categories. But there are more ways to bring reaction systems and
category theory together.
Whenever one has
a class of entities, one may try to use them as objects of a category by
choosing suitable morphisms. Therefore, one may ask how reaction systems
over a category may be provided with a meaningful notion of morphisms.

In this section, we show that, given a reaction system $\mathcal{A}=(B,A)$ over $\cat{C}$,
a monomorphism $f\colon B \to B’$ induces a reaction system $f(\mathcal{A})$
by composing all the components of reactions with $f$.
This observation motivates us to consider such a
morphism as morphism from $\mathcal{A}$ to $\mathcal{A'} = (B',A')$
provided that $f(A)\subseteq A'$.




\begin{theorem}
\label{thm:cat-rs-mono}
Given a reaction system $\mathcal{A} = (B,A)$ and
a monomorphism $f\colon B \to B'$.
Then $f$ induces a reaction system
$f(\mathcal{A}) = (B', f(A))$
where $f(A) = \{ f(a) \mid a \in A \}$ and $f(a)= \reaction {B'} {f\circ r} R {f\circ i} I {i_0} {I_0} {f\circ p} P$ for $a= \reaction B r R i I {i_0} {I_0} p P$.

$f(\mathcal{A})$ has the following properties.
\begin{enumerate}
\item
$en_a(t)$ on a state $t\colon T \to B$ if and only if $en_{f(a)}(f \circ t)$,

\item
$f \circ res_a(t) = res_{f(a)}(f\circ t)$,

\item
$f \circ res_{\mathcal{A}}(t) = res_{f(\mathcal{A})}(f \circ t)$.

\end{enumerate}
\end{theorem}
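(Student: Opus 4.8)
The plan is to reduce all three properties to a single structural fact: composition with the monomorphism $f$ commutes with the formation of intersections and unions of subobjects. I would first isolate this as a lemma. For subobjects $p_1\colon P_1\to B$ and $p_2\colon P_2\to B$, a cone $q_1\colon Y\to P_1,\ q_2\colon Y\to P_2$ satisfies $(f\circ p_1)\circ q_1=(f\circ p_2)\circ q_2$ if and only if $p_1\circ q_1=p_2\circ q_2$, where the crucial ``if'' uses precisely that $f$ is a monomorphism. Hence $\PB(f\circ p_1,f\circ p_2)$ and $\PB(p_1,p_2)$ have the same pullback object and the same projections, giving $(f\circ p_1)\cap(f\circ p_2)=f\circ(p_1\cap p_2)$ as subobjects of $B'$. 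The same cancellation, applied pairwise, shows that $\PB(\{f\circ p\mid p\in S\})$ has the same pullback objects and projections as $\PB(S)$ (and the same pairing, since $f\circ p_1\iso f\circ p_2$ forces $p_1\iso p_2$). Therefore $\COLIMIT(\PB(\{f\circ p\}))=\COLIMIT(\PB(S))$ with identical injections $p''$, and the composite $f\circ\union(S)$ --- a monomorphism, being a composite of two --- satisfies $(f\circ\union(S))\circ p''=f\circ p$ for each $p\in S$. By uniqueness of the universal morphism into $B'$, this forces $f\circ\union(S)=\union(\{f\circ p\mid p\in S\})$.

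With the lemma in hand, I would split property~1 into its reactant and inhibitor halves. For the reactant, $f\circ r\subseteq f\circ t$ means there is a monomorphism $s\colon R\to T$ with $f\circ r=(f\circ t)\circ s=f\circ(t\circ s)$; cancelling the monic $f$ gives $r=t\circ s$, i.e.\ $r\subseteq t$, with the same witness $s$. For the inhibitor, the lemma yields $(f\circ t)\cap(f\circ i)=f\circ(t\cap i)$, and $(f\circ i)\circ i_0=f\circ(i\circ i_0)$, so the condition $(f\circ t)\cap(f\circ i)\subseteq(f\circ i)\circ i_0$ unfolds to $f\circ(t\cap i)=f\circ(i\circ i_0)\circ s'$ for some monomorphism $s'\colon T\cap I\to I_0$; cancelling $f$ recovers exactly $t\cap i\subseteq i\circ i_0$. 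Both clauses of $\en_{f(a)}(f\circ t)$ are thus equivalent to the corresponding clauses of $\en_a(t)$, establishing property~1.

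Property~2 then follows from the case split built into $\res$. If $\en_a(t)$ holds, property~1 gives $\en_{f(a)}(f\circ t)$, and both sides equal $f\circ p$ since $p_{f(a)}=f\circ p$. If $\en_a(t)$ fails, then $\res_a(t)=\emptymor{B}$ and $\res_{f(a)}(f\circ t)=\emptymor{B'}$, while $f\circ\emptymor{B}=\emptymor{B'}$ because both are initial morphisms into $B'$ and the initial morphism is unique. Finally, property~3 combines property~2 with the union half of the lemma: writing $S=\{\res_a(t)\mid a\in A\}$, I have $\res_{f(\mathcal{A})}(f\circ t)=\union(\{\res_{f(a)}(f\circ t)\mid a\in A\})=\union(\{f\circ\res_a(t)\mid a\in A\})=f\circ\union(S)=f\circ\res_{\mathcal{A}}(t)$.

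The step I expect to be the main obstacle is the union half of the lemma. The intersection case is a one-line monomorphism cancellation, but the union is defined through a colimit of the entire diagram $\PB(S)$, so I must check carefully that passing from $S$ to $\{f\circ p\mid p\in S\}$ leaves the colimit diagram genuinely unchanged --- same pullback objects, same projections, same indexing --- rather than merely isomorphic, and that the universal morphism into $B'$ produced by that colimit is literally $f\circ\union(S)$ and not just another representative of the same subobject. Tracking the injections $p''$ explicitly and appealing to the uniqueness clause of the colimit's universal property is what makes this precise; everything else reduces to cancelling $f$.
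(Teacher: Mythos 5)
Your proposal is correct and follows essentially the same route as the paper: the paper likewise isolates your lemma (its Lemma~\ref{lemma:pullback-union-extended-with-monos}: pullbacks of subobjects are unchanged under postcomposition with the monomorphism $f$, hence $f\circ\union(S)=\union(\{f\circ p\mid p\in S\})$), and then proves the three properties exactly as you do --- enabledness by splitting into the reactant and inhibitor clauses and cancelling the monic $f$, results by the enabled/not-enabled case split with $f\circ\emptymor{B}=\emptymor{B'}$, and the system-level result by the union half of the lemma. The only cosmetic difference is that the paper justifies the pullback-preservation step by the slogan that a monomorphism is a limit and limits compose, whereas you verify it directly by cone-chasing and monomorphism cancellation; your more explicit treatment of the colimit diagram (same objects, projections, and injections, not merely isomorphic ones) is if anything more careful than the paper's one-line version.
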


The following diagram shows $a$ and $f(a)$.
\[
\begin{tikzcd}
  & I_0 \arrow[d, "i_0"] & \\
R \arrow[dr, "r" ] \arrow[ddr, bend right, "f \circ r" below left] \arrow[ddr, phantom, "=" above left] & I \arrow[d, "i" ] \arrow[dd, bend right, "f \circ i" below left] \arrow[dd, phantom, "=" left] & P \arrow[dl, "p" ] \arrow[ddl, bend left, "f \circ p"  below right] \arrow[ddl, phantom, "=" description]\\
  & B \arrow[d, "f" ] & \\
  & B' &
\end{tikzcd}
\]

The proof uses the following lemma.
\begin{lemma}
\label{lemma:pullback-union-extended-with-monos}
\begin{enumerate}
\item
Let $p\colon P \to B, \overline{p}\colon \overline{P} \to B$ and $f\colon B \to B'$ be monomorphisms.
Let $L = (\PB, p'\colon \PB \to P, \overline{p}'\colon \PB \to \overline{P})$ be a triple of an object $\PB$ and two monomorphisms $p'$ and $\overline{p}'$.
Then $L$ is a pullback of $p$ and $\overline{p}$ if and only if $L$ is a pullback of $f \circ p$ and $f \circ \overline{p}$.

\item
Let $S$ be a set of subobjects of $B$ and $f\colon B \to B'$ be a monomorphism.
Then $f \circ \union(S) = \union( \{ f \circ p \mid p \in S  \})$.

\end{enumerate}
\end{lemma}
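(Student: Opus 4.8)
The plan is to prove the two parts in order, using Part~1 as the engine for Part~2. Part~1 says that a monomorphism $f$ both preserves and reflects pullbacks of monomorphisms; the single recurring device is left-cancellation of $f$, which is available precisely because $f$ is a monomorphism. Once Part~1 is established, the pullback diagrams $\PB(S)$ and $\PB(\{f\circ p \mid p \in S\})$ turn out to be literally the same, so their colimits agree, and the two union morphisms differ only by post-composition with $f$.

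For Part~1, I would treat the two implications symmetrically. In both directions the commutativity of the relevant square is transferred by composing with $f$ or by cancelling it: from $p\circ p' = \overline{p}\circ \overline{p}'$ one gets $f\circ p\circ p' = f\circ \overline{p}\circ \overline{p}'$ directly, and conversely $f\circ p\circ p' = f\circ \overline{p}\circ \overline{p}'$ yields $p\circ p' = \overline{p}\circ \overline{p}'$ because $f$ is a monomorphism. For the universal property in the forward direction, suppose $L$ is a pullback of $p,\overline{p}$ and let $Y$ with $u_1\colon Y \to P$, $u_2\colon Y \to \overline{P}$ satisfy $f\circ p\circ u_1 = f\circ \overline{p}\circ u_2$; cancelling $f$ gives $p\circ u_1 = \overline{p}\circ u_2$, and the pullback property of $L$ over $p,\overline{p}$ supplies the unique mediating morphism $Y \to \PB$. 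The reverse direction is symmetric: there one composes a test cone for $p,\overline{p}$ with $f$ to obtain a test cone for $f\circ p, f\circ \overline{p}$ and invokes the pullback property of $L$ over $f\circ p, f\circ \overline{p}$, the cancellation of $f$ now being the step that recovers commutativity of the square. Uniqueness of the mediating morphism transfers verbatim, since both candidate mediators satisfy the same two equations $p'\circ w = u_1$ and $\overline{p}'\circ w = u_2$.

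For Part~2, I would first note that each $f\circ p$ is a monomorphism (composites of monos are monos) and that $p \mapsto f\circ p$ is injective on subobjects, again by left-cancellation of $f$; hence $\{f\circ p \mid p \in S\}$ is a genuine set of subobjects of $B'$ whose pairs correspond bijectively to those of $S$. By Part~1, for each pair the pullback triple is the same whether taken over $B$ or over $B'$, so $\PB(S) = \PB(\{f\circ p \mid p \in S\})$ and consequently $\COLIMIT(\PB(S))$ with its injections $p''$ serves as the colimit in both cases. The union $\union(S)$ is the unique universal morphism $m$ from this colimit to $B$ with $m\circ p'' = p$, and $\union(\{f\circ p \mid p \in S\})$ is the unique universal morphism $m'$ from the same colimit to $B'$ with $m'\circ p'' = f\circ p$. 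Since $f\circ m$ also satisfies $(f\circ m)\circ p'' = f\circ p$ and the defining cocone $\{f\circ p\}$ is compatible (obtained from the compatible cocone $\{p\}$ by composing with $f$), uniqueness in the colimit universal property forces $m' = f\circ m$, which is exactly $f\circ \union(S) = \union(\{f\circ p \mid p \in S\})$.

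I expect the only delicate point to be bookkeeping in Part~1: ensuring that the mediating morphism produced in each direction is genuinely unique and that the two pullback cones are identified as the same triple $L$ rather than merely isomorphic ones. The rest is a routine diagram chase driven entirely by the fact that $f$ can be cancelled on the left.
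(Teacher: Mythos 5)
Your proof is correct, and for Part~2 it follows essentially the same route as the paper: identify $\PB(S)$ with $\PB(\{f\circ p \mid p\in S\})$ via Part~1, take the same colimit object with the same injections, and let the uniqueness clause of the colimit's universal property force $f\circ\union(S)=\union(\{f\circ p\mid p\in S\})$. Where you genuinely diverge is Part~1: the paper disposes of it in one line by noting that a monomorphism is itself a limit (the square on $f$ with two identities is a pullback precisely when $f$ is mono) and that limits compose, i.e., it invokes the pullback pasting lemma; you instead carry out the explicit diagram chase, transferring commutativity of the square and the universal property in both directions by composing with $f$ or cancelling it on the left, and observing that the uniqueness clause is stated by equations in $p'$ and $\overline{p}'$ alone and hence transfers verbatim. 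Your chase is exactly what the pasting argument unfolds to, so the content is the same, but your version is self-contained and elementary, whereas the paper's is shorter and situates the fact in general limit theory. A small point in your favour: in Part~2 you explicitly check that $p\mapsto f\circ p$ is injective (by left-cancellation, even at the level of subobject representatives), so that distinct pairs in $S$ correspond bijectively to distinct pairs in $\{f\circ p \mid p\in S\}$ and the two pullback diagrams really coincide; the paper silently assumes this bookkeeping.
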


\begin{proof}
Point 1 follows immediately from the observation that a monomorphism is a limit and that limits compose.


2.
Using Point 1, one get $\PB(S) = \PB(\{ f \circ p \mid p \in S \})$
so that $\UNION(S) \iso \UNION(\{ f \circ p \mid p \in S  \})$ and $f \circ \union(S) = \union( \{  f \circ p \mid p \in S \})$ as subobjects.
\end{proof}

The situation of Point 1 of the lemma is depicted in the following diagram.
\[
\begin{tikzcd}
& Y \arrow[d, dashed, "u" ] \arrow[ddl, phantom, bend right, "~=" below right] \arrow[ddl, bend right, "p''" above left]  \arrow[ddr, bend left, "\overline{p}''" above right] \arrow[ddr, phantom, bend left, "=~" below left]  & \\
& \PB(p,\overline{p})\arrow[dr, "\overline{p}'" below left]\arrow[dl, "p'"] \arrow[dd, phantom, "="] & \\
P\arrow[dr, "p"] \arrow[ddr, bend right, "f \circ p" below left] \arrow[ddr, phantom, bend right, "~=" above right] && \overline{P}\arrow[dl, "\overline{p}"] \arrow[ddl, bend left, "f \circ \overline{p}"  below right] \arrow[ddl, phantom, bend left, "=~" above left]\\
& B \arrow[d, "f" ] & \\
& B' &
\end{tikzcd}
\]
And the situation of Point 2 where $S=\{p_1,p_2,p_3\}$ is depicted in the following diagram.
\[
\begin{tikzcd}
& \PB(p_1,p_2) \arrow[dl, "p'_1" above left] \arrow[dr, "p'_2" below left] & \PB(p_1,p_3) \arrow[dll, "p'_1" below left] \arrow[drr, "p'_3" below left] & \PB(p_2,p_3) \arrow[dl, "p'_2" below left] \arrow[dr, "p'_3" above right] &\\
P_1 \arrow[ddddrr, bend right = 40, "q_1" below left] \arrow[dddrr, bend right, "f \circ p_1" below left] \arrow[ddrr, bend right, "p_1" below left] \arrow[drr, "p_1''" above right] & & P_2 \arrow[dddd, bend left = 90, "q_2" near end] \arrow[ddd, bend left = 70, "f \circ p_2" left] \arrow[dd, bend left = 25, "p_2" near start]  \arrow[d, "p_2''" left] & & P_3 \arrow[ddddll, bend left = 40, "q_3" below right] \arrow[dddll, bend left, "f \circ p_3" below right] \arrow[ddll, bend left, "p_3" below right] \arrow[dll, "p_3''" above left]\\ %
& & \UNION(\{p_1,p_2,p_3\}) \arrow[d, "\union(\{\text{$p_1,p_2,p_3$}\})" left] \arrow[ddd, bend right, "m" left] & & \\ 
&& B \arrow[d, "f" ] && \\
&& B' && \\
&& X &&&
\end{tikzcd}
\]

\begin{proof} of Theorem~\ref{thm:cat-rs-mono}.
1.
Given a reaction $a = (r, (i,i_0), p)$ and a state $t$ in $\mathcal{A}$,
$en_a(t)$ means $r \subseteq t$ and $t \cap i \subseteq i \circ i_0$.
By definition, there are monomorphisms $s$ and $s'$ with $r=t \circ s$ and $t \cap i = i \circ i_0 \circ s'$.
This implies $f \circ r = f \circ t \circ s$ and by Point~1 of Lemma~\ref{lemma:pullback-union-extended-with-monos} also
$f \circ r \subseteq f \circ t$ and $(f \circ t) \cap (f \circ i) = f \circ (t\cap i) = f \circ i \circ i_0 \circ s'$.
This means $(f \circ t) \cap (f \circ i) \subseteq f \circ i \circ i_0$, and, therefore, $en_{f(a)}(f \circ t)$.

Conversely. $en_{f(a)}(f \circ t)$ means $f \circ r \subseteq f \circ t$ and $(f\circ t) \cap (f \circ i) \subseteq f \circ i \circ i_0$.
By definition, there are monomorphisms $s$ and $s'$ with $f \circ r = f \circ t \circ s$ and $(f\circ t) \cap (f \circ i) = f \circ i \circ i_0 \circ s'$.
By the Point~1 of~Lemma~\ref{lemma:pullback-union-extended-with-monos}
one has $f \circ (t \cap i ) = (f \circ t) \cap (f \circ i)$ so that the monomorphisms of $f$ yields $r = t \circ s$ and $t \cap i = i \circ i_0 \circ s'$.
This means $r \subseteq t$ and $t \cap i \subseteq i \circ i_0$ and, therefore, $en_a(t)$.

2.
According to Point 1, there are two cases to consider using the definition of results:
$f \circ res_a(t) = f \circ p = res_{f(a)}(f \circ t)$ provided that $a$ is enabled on $t$ and $f(a)$ on $f\circ t$;
and
$f \circ res_a(t) = f \circ \emptymor{B} = \emptymor{B'} = res_{f(a)}(f \circ t)$ otherwise.

3.
Using the definition of results of reaction systems and sets of reactions as well as Points 1 and 2
of Lemma~\ref{lemma:pullback-union-extended-with-monos}, one gets as stated:
$f \circ res_{\mathcal{A}}(t) = f \circ  res_A(t) = f \circ \union( \{ res_a(t) \mid a \in A  \} ) = \union( \{ f \circ res_a(t) \mid a \in A  \} ) = res_{f(A)}(f \circ t) = res_{f(\mathcal{A})}(f \circ t)$.
\end{proof}

Using Point 3 of the theorem and Point~\ref{item:union-8}
of Properties~\ref{properties:empty-objects-intersection-union},
one gets the following result.

\begin{corollary}
\label{corollary:rs-cat-mono}
Let
$\mathcal{A} = (B,A)$ and $\mathcal{A'} = (B',A')$
be two reaction systems over $\cat{C}$ with
$f(A) \subseteq A'$
for some monomorphism $f\colon B \to B'$.
Then
$f \circ res_{\mathcal{A}}(t) \subseteq res_{\mathcal{A'}}(f \circ t)$ for all states $t\colon T \to B$.
\end{corollary}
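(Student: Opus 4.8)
The plan is to chain the two cited results, so the argument is short. First I would invoke Point~3 of Theorem~\ref{thm:cat-rs-mono}, which gives the \emph{equality} $f \circ res_{\mathcal{A}}(t) = res_{f(\mathcal{A})}(f \circ t)$ for the induced system $f(\mathcal{A}) = (B', f(A))$. This reduces the claim to showing $res_{f(\mathcal{A})}(f \circ t) \subseteq res_{\mathcal{A'}}(f \circ t)$, i.e.\ an inclusion between two results computed over the \emph{same} background $B'$ and evaluated at the \emph{same} state $f \circ t$.

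Next I would unfold both results by their definition as unions over the respective reaction sets: $res_{f(\mathcal{A})}(f \circ t) = \union(\{\, res_a(f \circ t) \mid a \in f(A) \,\})$ and $res_{\mathcal{A'}}(f \circ t) = \union(\{\, res_a(f \circ t) \mid a \in A' \,\})$. Since $f(A) \subseteq A'$ by hypothesis, and since both families are evaluated at the identical state $f \circ t$, the first indexed family is literally a subset of the second as a set of subobjects of $B'$.

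The conclusion then follows from Point~\ref{item:union-8} of Properties~\ref{properties:empty-objects-intersection-union}: a union over a smaller set of subobjects is a subobject of the union over the larger set. Applying it to $S_0 = \{\, res_a(f \circ t) \mid a \in f(A) \,\} \subseteq \{\, res_a(f \circ t) \mid a \in A' \,\} = S$ yields $res_{f(\mathcal{A})}(f \circ t) \subseteq res_{\mathcal{A'}}(f \circ t)$, and composing with the equality from the first step gives $f \circ res_{\mathcal{A}}(t) \subseteq res_{\mathcal{A'}}(f \circ t)$, as required.

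There is no serious obstacle here, since the substantive work has already been carried out in Theorem~\ref{thm:cat-rs-mono} and Properties~\ref{properties:empty-objects-intersection-union}. The one point that deserves a moment's care is that the monotonicity of $\union$ is stated for an inclusion $S_0 \subseteq S$ of sets of subobjects, so I must make sure the two result families are compared as subobjects of the common background $B'$ at the common state $f \circ t$. This is exactly what Point~3 of the theorem arranges, by landing $f \circ res_{\mathcal{A}}(t)$ inside $f(\mathcal{A})$ rather than $\mathcal{A}$, so that the hypothesis $f(A) \subseteq A'$ can be applied directly.
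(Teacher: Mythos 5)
Your proof is correct and matches the paper's own argument exactly: the paper likewise derives the corollary by combining Point~3 of Theorem~\ref{thm:cat-rs-mono} with the monotonicity of $\union$ from Point~\ref{item:union-8} of Properties~\ref{properties:empty-objects-intersection-union}. Your write-up merely spells out the intermediate step (unfolding both results as unions over the reaction sets and using $f(A) \subseteq A'$), which the paper leaves implicit.
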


This motivates to define the category $\cat{RS(C)}$.

\begin{definition}
Let $\cat{C}$ be an \eiu-category.
\begin{enumerate}
\item
The category $\cat{RS(C)}$ is defined as follows.
Its objects are reactions systems over $\cat{C}$.
Given two reaction systems $\mathcal{A} = (B,A)$ and $\mathcal{A'} = (B',A')$ over $\cat{C}$,
a morphisms $f\colon \mathcal{A} \to \mathcal{A'}$ is given by monomorphisms $f\colon B \to B'$ 
provided that $f(A)\subseteq A'$.
Compositions and identities are given by the underlying morphisms.

\item
If $f \circ res_{\mathcal{A}}(t) = res_{\mathcal{A'}}(f \circ t)$ for all states $t\colon T \to B$, then $f\colon \mathcal{A} \to \mathcal{A'}$ is called \emph{strong}.
\end{enumerate}
\end{definition}

The definition of composition and identities is meaningful as, for reaction systems
$\mathcal{A} = (B,A)$, $\mathcal{A'} = (B',A')$ and $\mathcal{A''} = (B'',A'')$
and for morphisms $f\colon \mathcal{A} \to \mathcal{A'}$ and $g\colon \mathcal{A'} \to \mathcal{A''}$,
$(g \circ f )(A) = g(f(A)) \subseteq g(A') \subseteq A''$
and
$\id_B(A) = A$.

\begin{example}
Consider the two reaction systems over $\sigmacat{Hypergraphs}$
$\mathcal{A}_{m,n},  \mathcal{A}_{m',n'}$ with $m\le m'$ and $n \le n'$
as defined in Section~\ref{sect-reaction-systems-hypergraph}.

The inclusion of $B_{m,n}$ into $B_{m',n'}$ induces a morphism from
$\mathcal{A}_{m,n}$ to $\mathcal{A}_{m',n'}$ as $A_{m,n} \subseteq A_{m',n'}$.
This morphism is strong as one can see as follows.
Let $T$ be a sub-$\Sigma$-hypergraph of $B_{m,n}$
representing a state of $\mathcal{A}_{m,n}$.
Then $T$ represents also a state of $\mathcal{A}_{m',n'}$.
According to Corollary~\ref{corollary:rs-cat-mono}
we know that
$res_{\mathcal{A}_{m,n}}(T) \subseteq res_{\mathcal{A}_{m',n'}}(T)$.
Let now $(R',-,P')$ be a reaction in $\mathcal{A}_{m',n'}$
that is not in $\mathcal{A}_{m,n}$.
As $B_{m,n}$ is
complete with respect to hyperedges including flags, $R'$ and $P'$ must
contain a vertex $k'>n$.
Consequently, $R' \not\subseteq T$ such that the
reaction is not enabled and none of those can contribute to
$res_{\mathcal{A}_{m,n}}(T)$ meaning that
$res_{\mathcal{A}_{m,n}}(T) = res_{\mathcal{A}_{m',n'}}(T)$.
Summarizing, the family
$\{\mathcal{A}_{m,n}\}_{m,n \in \Nat}$
forms a two-dimensional
grid connected by strong morphisms along growing indices. This is
interesting with respect to the vertex-coverability of hypergraphs. Each
hypergraph can be transformed into a sub-$\Sigma$-hypergraph of
$B_{m,n}$ for
some $m,n$ by numbering the vertices and removing labels, multiples of
hyperedges and multiples of vertex attachments within a hyperedge in such
a way that its vertex-coverability is preserved. Then the grid
of strong morphisms makes sure that the result of the vertex-coverability
test is independent of the choice of the $B_{m,n}$ as long as the
transformation works. In this sense, the family
$\{\mathcal{A}_{m,n}\}_{m,n \in \Nat}$
models a vertex-coverability test for all hypergraphs.
\end{example}




\section{Conclusion}
\label{sec:conclusion}
In this paper, we have proposed a categorical framework for the modeling
of reaction systems. We have provided appropriate categorical notions
including finite objects, subobjects, subobject inclusions, empty subobjects,
intersections and unions of subobjects that allow the
definition of reaction systems over \eiu-categories and their
interactive-process semantics in a quite similar way to the known set-
and graph-based reactions systems. Moreover, we have shown that many
categories meet the categorical requirements so that many structures
become available on which reaction systems may be based on. This
includes, in particular, quite a variety of graphs, hypergraphs, and
other graph-like structures. But we have only done the very first steps
into a categorical approach. To shine more light on the significance of
the framework, the investigation should be continued including the
following topics.

\begin{enumerate}

\item
As pointed out at the end of Section~\ref{sec:preliminaries}, it
would be interesting to clarify the relationship between
\eiu-categories and the well-studied  adhesive categories that are
successfully applied in the area of graph transformation in various
variants (cf., e.g.,~\cite{Lack-Sobocinski:05,Ehrig-Ehrig-Prange.ea:06,Corradini-Herman-Sobocinski:08,Braatz-Ehrig-Gabriel-Golas:10,Ehrig-Ermel-Golas-Hermann:15}).

\item
In Section~\ref{sec:further-look-at-the-categorial-framework},
we have shown that diagram categories
provide a reservoir of \eiu-categories.
Another way to find appropriate categories is the restriction of
\eiu-categories to subcategories. For example, if
one restricts the category $\sigmacat{Graphs}$ to simple graphs, then this
category is closed under empty subobjects, intersections and unions so that
this category inherits all reaction systems over $\sigmacat{Graphs}$ if the
background graph is simple. How do general restriction principles look
like that yield such subcategories?


\item
In Section~\ref{sec:category-reaction-systems-over-cat},
we have shown that monomorphisms on the
background objects provide suitable morphisms between reaction systems
over a category. What about further possibilities?

\item
Another direction of research of this kind may be to consider functors.
For instance, the usual embedding of $\Sigma$-graphs into
$\Sigma$-hypergraphs induces such a functor.
The other way round, the usual
transformation of a hypergraph into a graph can be extended to morphisms.
The question is which properties of a functor $F\colon \cat{C} \to \cat{C’}$
are sufficient such that a reaction system $\mathcal{A}$ over $\cat{C}$
is translated into a reaction system $F(\mathcal{A})$ over $\cat{C’}$.
Whenever this works, one can compare reaction systems
over different categories.

\end{enumerate}

\paragraph{Acknowledgment}
We are grateful to
Nicolas Behr,
Andrea Corradini,
Reiko Heckel,
Berthold Hoffmann,
Jens Kosiol,
Sabine Kuske
and
Gabriele Taenzer
for their valuable hints.
We are particularly grateful
to Grzegorz Rozenberg who encouraged us to investigate graph surfing in
reaction systems from a categorical point of view.
Last but not least, we would like to thank the anonymous reviewers for
their helpful comments and hints that led to some improvements.

\bibliographystyle{eptcs}
\bibliography{lit-rs,lit_all}

\begin{thebibliography}{10}
\providecommand{\bibitemdeclare}[2]{}
\providecommand{\surnamestart}{}
\providecommand{\surnameend}{}
\providecommand{\urlprefix}{Available at }
\providecommand{\url}[1]{\texttt{#1}}
\providecommand{\href}[2]{\texttt{#2}}
\providecommand{\urlalt}[2]{\href{#1}{#2}}
\providecommand{\doi}[1]{doi:\urlalt{http://dx.doi.org/#1}{#1}}
\providecommand{\bibinfo}[2]{#2}

\bibitemdeclare{book}{Adamek-Herrlich-Strecker:09}
\bibitem{Adamek-Herrlich-Strecker:09}
\bibinfo{author}{Jir{\'{\i}} \surnamestart Ad{\'{a}}mek\surnameend},
  \bibinfo{author}{Horst \surnamestart Herrlich\surnameend} \&
  \bibinfo{author}{George~E. \surnamestart Strecker\surnameend}
  (\bibinfo{year}{2009}): \emph{\bibinfo{title}{Abstract and Concrete
  Categories - The Joy of Cats}}.
\newblock \bibinfo{publisher}{Dover Publications}.

\bibitemdeclare{inproceedings}{Braatz-Ehrig-Gabriel-Golas:10}
\bibitem{Braatz-Ehrig-Gabriel-Golas:10}
\bibinfo{author}{Benjamin \surnamestart Braatz\surnameend},
  \bibinfo{author}{Hartmut \surnamestart Ehrig\surnameend},
  \bibinfo{author}{Karsten \surnamestart Gabriel\surnameend} \&
  \bibinfo{author}{Ulrike \surnamestart Golas\surnameend}
  (\bibinfo{year}{2010}): \emph{\bibinfo{title}{Finitary $\mathcal{M}$-Adhesive
  Categories}}.
\newblock In \bibinfo{editor}{Hartmut \surnamestart Ehrig\surnameend},
  \bibinfo{editor}{Arend \surnamestart Rensink\surnameend},
  \bibinfo{editor}{Grzegorz \surnamestart Rozenberg\surnameend} \&
  \bibinfo{editor}{Andy \surnamestart Sch{\"{u}}rr\surnameend}, editors: {\sl
  \bibinfo{booktitle}{5th International Conference on Graph Transformations,
  {ICGT} 2010, Proceedings}}, {\sl \bibinfo{series}{Lecture Notes in Computer
  Science}} \bibinfo{volume}{6372}, \bibinfo{publisher}{Springer}, pp.
  \bibinfo{pages}{234--249}, \doi{10.1007/978-3-642-15928-2\_16}.

\bibitemdeclare{article}{BEMR11}
\bibitem{BEMR11}
\bibinfo{author}{Robert \surnamestart Brijder\surnameend},
  \bibinfo{author}{Andrzej \surnamestart Ehrenfeucht\surnameend},
  \bibinfo{author}{Michael~G. \surnamestart Main\surnameend} \&
  \bibinfo{author}{Grzegorz \surnamestart Rozenberg\surnameend}
  (\bibinfo{year}{2011}): \emph{\bibinfo{title}{A Tour of Reaction Systems}}.
\newblock {\sl \bibinfo{journal}{International Journal of Foundations of
  Computer Science}} \bibinfo{volume}{22}(\bibinfo{number}{7}), pp.
  \bibinfo{pages}{1499--1517}, \doi{10.1142/S0129054111008842}.

\bibitemdeclare{article}{Corradini-Herman-Sobocinski:08}
\bibitem{Corradini-Herman-Sobocinski:08}
\bibinfo{author}{Andrea \surnamestart Corradini\surnameend},
  \bibinfo{author}{Frank \surnamestart Hermann\surnameend} \&
  \bibinfo{author}{Pawe{\l} \surnamestart Soboci{\'{n}}ski\surnameend}
  (\bibinfo{year}{2008}): \emph{\bibinfo{title}{Subobject Transformation
  Systems}}.
\newblock {\sl \bibinfo{journal}{Applied Categorical Structures}}
  \bibinfo{volume}{16}(\bibinfo{number}{3}), pp. \bibinfo{pages}{389--419},
  \doi{10.1007/s10485-008-9127-6}.

\bibitemdeclare{incollection}{EPR17}
\bibitem{EPR17}
\bibinfo{author}{Andrzej \surnamestart Ehrenfeucht\surnameend},
  \bibinfo{author}{Ion \surnamestart Petre\surnameend} \&
  \bibinfo{author}{Grzegorz \surnamestart Rozenberg\surnameend}
  (\bibinfo{year}{2017}): \emph{\bibinfo{title}{Reaction Systems: A Model of
  Computation Inspired by the Functioning of the Living Cell}}.
\newblock In \bibinfo{editor}{S.~\surnamestart Konstantinidis\surnameend},
  \bibinfo{editor}{N~\surnamestart Moreira\surnameend},
  \bibinfo{editor}{R.~\surnamestart Reis\surnameend} \&
  \bibinfo{editor}{J.~\surnamestart Shallit\surnameend}, editors: {\sl
  \bibinfo{booktitle}{The Role of Theory in Computing}},
  \bibinfo{publisher}{World Scientific Publishing Co.},
  \bibinfo{address}{Singapore}, pp. \bibinfo{pages}{11--32},
  \doi{10.1142/9789813148208\_0001}.

\bibitemdeclare{article}{Ehrenfeucht-Rozenberg:07}
\bibitem{Ehrenfeucht-Rozenberg:07}
\bibinfo{author}{Andrzej \surnamestart Ehrenfeucht\surnameend} \&
  \bibinfo{author}{Grzegorz \surnamestart Rozenberg\surnameend}
  (\bibinfo{year}{2007}): \emph{\bibinfo{title}{Reaction Systems}}.
\newblock {\sl \bibinfo{journal}{Fundamenta Informaticae}}
  \bibinfo{volume}{75}(\bibinfo{number}{1-4}), pp. \bibinfo{pages}{263--280}.

\bibitemdeclare{book}{Ehrig-Ehrig-Prange.ea:06}
\bibitem{Ehrig-Ehrig-Prange.ea:06}
\bibinfo{author}{Hartmut \surnamestart Ehrig\surnameend},
  \bibinfo{author}{Karsten \surnamestart Ehrig\surnameend},
  \bibinfo{author}{Ulrike \surnamestart Prange\surnameend} \&
  \bibinfo{author}{Gabriele \surnamestart Taentzer\surnameend}
  (\bibinfo{year}{2006}): \emph{\bibinfo{title}{Fundamentals of Algebraic Graph
  Transformation}}.
\newblock \bibinfo{series}{Monographs in Theoretical Computer Science. An
  {EATCS} Series}, \bibinfo{publisher}{Springer}, \doi{10.1007/3-540-31188-2}.

\bibitemdeclare{book}{Ehrig-Ermel-Golas-Hermann:15}
\bibitem{Ehrig-Ermel-Golas-Hermann:15}
\bibinfo{author}{Hartmut \surnamestart Ehrig\surnameend},
  \bibinfo{author}{Claudia \surnamestart Ermel\surnameend},
  \bibinfo{author}{Ulrike \surnamestart Golas\surnameend} \&
  \bibinfo{author}{Frank \surnamestart Hermann\surnameend}
  (\bibinfo{year}{2015}): \emph{\bibinfo{title}{Graph and Model Transformation
  - General Framework and Applications}}.
\newblock \bibinfo{series}{Monographs in Theoretical Computer Science. An
  {EATCS} Series}, \bibinfo{publisher}{Springer},
  \doi{10.1007/978-3-662-47980-3}.

\bibitemdeclare{inproceedings}{FMP14}
\bibitem{FMP14}
\bibinfo{author}{Enrico \surnamestart Formenti\surnameend},
  \bibinfo{author}{Luca \surnamestart Manzoni\surnameend} \&
  \bibinfo{author}{Porreca \surnamestart Antonio~E\surnameend}
  (\bibinfo{year}{2014}): \emph{\bibinfo{title}{Fixed Points and Attractors of
  Reaction Systems}}.
\newblock In \bibinfo{editor}{A.~\surnamestart Beckmann\surnameend},
  \bibinfo{editor}{E.~\surnamestart Csuhaj-Varjú\surnameend} \&
  \bibinfo{editor}{K.~\surnamestart Meer\surnameend}, editors: {\sl
  \bibinfo{booktitle}{Language, Life, Limits}}, {\sl \bibinfo{series}{Lecture
  Notes in Computer Science}} \bibinfo{volume}{8493}, pp.
  \bibinfo{pages}{194--203}, \doi{10.1007/978-3-319-08019-2\_20}.

\bibitemdeclare{article}{Kreowski-Lye:20}
\bibitem{Kreowski-Lye:20}
\bibinfo{author}{Hans-J{\"o}rg \surnamestart Kreowski\surnameend} \&
  \bibinfo{author}{Aaron \surnamestart Lye\surnameend} (\bibinfo{year}{2020}):
  \emph{\bibinfo{title}{A Categorial Approach to Reaction Systems: First
  Steps}}.
\newblock {\sl \bibinfo{journal}{Theoretical Computer Science: Information and
  Computation}}, \doi{10.1016/j.tcs.2020.08.013}.

\bibitemdeclare{inproceedings}{Kreowski-Rozenberg:18}
\bibitem{Kreowski-Rozenberg:18}
\bibinfo{author}{Hans{-}J{\"{o}}rg \surnamestart Kreowski\surnameend} \&
  \bibinfo{author}{Grzegorz \surnamestart Rozenberg\surnameend}
  (\bibinfo{year}{2018}): \emph{\bibinfo{title}{Graph Surfing by Reaction
  Systems}}.
\newblock In \bibinfo{editor}{Leen \surnamestart Lambers\surnameend} \&
  \bibinfo{editor}{Jens~H. \surnamestart Weber\surnameend}, editors: {\sl
  \bibinfo{booktitle}{11th International Conference on Graph Transformation,
  {ICGT} 2018, Proceedings}}, {\sl \bibinfo{series}{Lecture Notes in Computer
  Science}} \bibinfo{volume}{10887}, \bibinfo{publisher}{Springer}, pp.
  \bibinfo{pages}{45--62}, \doi{10.1007/978-3-319-92991-0\_4}.

\bibitemdeclare{article}{Kreowski-Rozenberg:19}
\bibitem{Kreowski-Rozenberg:19}
\bibinfo{author}{Hans{-}J{\"{o}}rg \surnamestart Kreowski\surnameend} \&
  \bibinfo{author}{Grzegorz \surnamestart Rozenberg\surnameend}
  (\bibinfo{year}{2019}): \emph{\bibinfo{title}{Graph Transformation through
  Graph Surfing in Reaction Systems}}.
\newblock {\sl \bibinfo{journal}{Journal of Logical and Algebraic Methods in
  Programming}} \bibinfo{volume}{109}(\bibinfo{number}{100481}),
  \doi{10.1016/j.jlamp.2019.100481}.

\bibitemdeclare{article}{Lack-Sobocinski:05}
\bibitem{Lack-Sobocinski:05}
\bibinfo{author}{Stephen \surnamestart Lack\surnameend} \&
  \bibinfo{author}{Pawe{\l} \surnamestart Soboci{\'{n}}ski\surnameend}
  (\bibinfo{year}{2005}): \emph{\bibinfo{title}{Adhesive and quasiadhesive
  categories}}.
\newblock {\sl \bibinfo{journal}{{RAIRO} Theoretical Informatics and
  Applications}} \bibinfo{volume}{39}(\bibinfo{number}{3}), pp.
  \bibinfo{pages}{511--545}, \doi{10.1051/ita:2005028}.

\bibitemdeclare{article}{Sal12a}
\bibitem{Sal12a}
\bibinfo{author}{Arto \surnamestart Salomaa\surnameend} (\bibinfo{year}{2012}):
  \emph{\bibinfo{title}{Functions and Sequences Generated by Reaction
  Systems}}.
\newblock {\sl \bibinfo{journal}{Theoretical Computer Science}}
  \bibinfo{volume}{466}(\bibinfo{number}{4--5}), pp. \bibinfo{pages}{87--96},
  \doi{10.1016/j.tcs.2012.07.022}.

\end{thebibliography}

\end{document}